\documentclass[lettersize,journal]{IEEEtran}
\IEEEoverridecommandlockouts
\usepackage{amsmath,amssymb,amsfonts,amsthm}
\usepackage{algorithmic}
\usepackage{algorithm}
\usepackage{array}
\usepackage{graphicx}
\usepackage{textcomp}
\usepackage{amsthm}
\usepackage{subfigure}
\usepackage{graphicx}
\usepackage{float}
\usepackage{enumerate}
\usepackage{tabularx}
\usepackage{stfloats}
\usepackage{url}
\usepackage{hyperref}
\usepackage{xcolor, soul, framed} 
\usepackage{xcolor}
\usepackage{verbatim}
\usepackage{pstricks}
\usepackage{booktabs} 
\usepackage{makecell}

\graphicspath{{figure/}} 

\newtheorem{thm}{Theorem}
\newtheorem{cor}{Corollary}

\newtheorem{lem}{Lemma}

\begin{document}
\renewcommand{\algorithmicrequire}{\textbf{Input:}} 
\renewcommand{\algorithmicensure}{\textbf{Output:}}
\title{Source Localization and Power Estimation through RISs: Performance Analysis and Prototype Validations \\  

\thanks{This work is supported by the National Natural Science Foundation of China under Grant 12141107, the Key Research and Development Program of Wuhan under Grant 2024050702030100, and the Interdisciplinary Research Program of HUST (2023JCYJ012). (\textit{Corresponding author: Tiebin Mi})}
}

\author{Fuhai Wang,~\IEEEmembership{Student Member,~IEEE,}
Tiebin Mi,~\IEEEmembership{Member,~IEEE,}
Chun Wang, 
Rujing Xiong,~\IEEEmembership{Student Member,~IEEE,}
Zhengyu Wang,~\IEEEmembership{Student Member,~IEEE,} 
and
Robert Caiming Qiu,~\IEEEmembership{Fellow,~IEEE} 
\thanks{Fuhai Wang is with the School of Electronic Information and Communications, Huazhong University of Science and Technology, Wuhan 430074, China, and also with the Institute of Artificial Intelligence, Huazhong University of Science and Technology, Wuhan 430074, China (e-mail: wangfuhai@hust.edu.cn).}
\thanks{Tiebin Mi, Chun Wang, Zhengyu Wang and Robert Caiming Qiu are with the School of Electronic Information and Communications, Huazhong University of Science and Technology, Wuhan 430074, China (e-mail: mitiebin@hust.edu.cn; m202272439@hust.edu.cn;  wangzhengyu@hust.edu.cn; caiming@hust.edu.cn).}
\thanks{Rujing Xiong is now with the School of Science and Engineering, The Chinese University of Hong Kong-Shenzhen, Shenzhen 518172, China. He was with the School of Electronic Information and Communications, Huazhong University of Science and Technology, Wuhan 430074, China (e-mail: rujingxiong@cuhk.edu.cn).}
}
\maketitle

\begin{abstract}
This paper investigates the capabilities and effectiveness of backward localization centered on reconfigurable intelligent surfaces (RISs). In the backward sensing paradigm, the region of interest (RoI) is illuminated using a set of diverse radiation patterns. These patterns encode spatial information into a sequence of measurements, which are subsequently processed to reconstruct the RoI. We show that a single RIS can estimate the direction of arrival of incident waves by leveraging configurational diversity, and that the spatial diversity provided by multiple RISs further improves the accuracy of source localization and power estimation. The underlying structure of the sensing operator in the multi-snapshot measurement process is clarified. For single-RIS localization, the sensing operator is decomposed into a product of structured matrices, each corresponding to a specific physical process: wave propagation to and from the RIS, the relative phase offsets of elements with respect to the reference point, and the applied phase configuration of each element. A unified framework for identifying key performance indicators is established by analyzing the conditioning of the sensing operators. In the multi-RIS setting, we derive--via rank analysis--the governing law among the RoI size, the number of elements, and the number of measurements. Upper bounds on the relative error of the least squares reconstruction algorithm are derived. These bounds clarify how key performance indicators affect estimation error and provide valuable guidance for system-level optimization. Numerical experiments confirm that the trend of the relative error is consistent with the theoretical bounds. Finally, we develop a proof-of-concept prototype using universal software radio peripherals and employ a magnitude-only reconstruction algorithm tailored to the system. To the best of our knowledge, this represents the first experimental demonstration of its kind.
\end{abstract}

\begin{IEEEkeywords}
Reconfigurable Intelligent Surface (RIS), backward sensing, source localization, power estimation, performance analysis, prototype validations. 
\end{IEEEkeywords}

\section{Introduction} 
\IEEEPARstart{R}econfigurable intelligent surfaces (RISs) have attracted significant attention for their potential to enhance both communication and sensing capabilities in wireless networks \cite{du2024nested,tang2020mimo,wu2019intelligent,basar2019wireless,song2023intelligent}. An RIS consists of cost-effective, well-designed electromagnetic (EM) units, each capable of independently modifying the characteristics of incident EM waves. These units enable the control of the scattering and reflection properties of EM waves through software-based manipulation. While most existing studies have focused on RIS-aided communication for coverage enhancement in both line-of-sight and non-line-of-sight scenarios \cite{WOS,xiong2024optimal,cheng2022reconfigurable}, RIS-enabled wireless sensing has more recently emerged as a rapidly growing research area \cite{zhang2021metalocalization,he2022high}. Among the various sensing tasks, source localization and power estimation are of particular importance, as they serve as the foundation for numerous higher-level applications \cite{romero2022radio,huang2014cooperative,bazerque2009distributed}. 

In current wireless communication systems, channel state information (CSI) constitutes an immediately available resource for sensing tasks~\cite{jiang2019exploiting,ferrand2023wireless,chen2022sensing,armenta2024wireless}. Since CSI is routinely estimated to facilitate data transmission, it can be repurposed for sensing without incurring additional hardware overhead. Extensive research has investigated CSI for applications such as localization~\cite{jiang2019exploiting,11072249}, tracking~\cite{wei2022channel,li2024wifi}, and activity recognition~\cite{liu2024deep,huang2025sensemamba}. However, CSI-based sensing approaches face inherent limitations. Although user-related information is embedded within CSI, it is often encoded in a complex, nonlinear, and high-dimensional manner that strongly depends on the surrounding environment. For example, fingerprint-based localization techniques~\cite{zhao2023nerf2,khatab2021fingerprint} are typically limited to static environments. Moreover, they require substantial labeled data and exhibit poor adaptability to environmental changes.

When RISs are integrated into sensing or communication pipelines, the channel characteristics become markedly more intricate~\cite{li2023riscan,liu2025tris,liu2024risar,li2024star}. The additional propagation paths introduced by RISs further exacerbate the inherently nonlinear and often ill-posed mapping between user-related attributes and the observed CSI. More critically, in most practical deployments, RISs operate in passive or semi-passive modes without active radio-frequency chains, which makes direct channel acquisition virtually impossible and severely undermines the reliability of CSI for sensing purposes~\cite{yuan2021reconfigurable,gomes2023channel}.

The second methodology for wireless sensing moves away from CSI-based approaches, representing a fundamental paradigm shift. This class of methods, which does not rely on CSI, is explicitly designed to achieve specific sensing objectives by directly engineering the underlying sensing mechanism~\cite{he2022high,sun2024computational,rinchi2022compressive,lin2021single}. For localization tasks, for example, these approaches leverage well-established physical models based on wave propagation theory to extract geometry-related information more effectively. By embedding domain knowledge into the sensing process, these physics-guided designs ensure that the localization framework remains physically informed and theoretically rigorous.

Within this methodological framework, two primary categories emerge: forward localization and backward localization. Forward localization draws conceptual inspiration from classical radio detection techniques. In this approach, the sensing system actively illuminates the target with a known probing signal and then processes the reflected or scattered waves to estimate the target's position or other spatial attributes. In the RIS-enabled forward localization paradigm, the RIS is configured to produce highly directive beams that concentrate energy on a designated region of interest (RoI)~\cite{keykhosravi2023leveraging,sun2024computational,he2022high,buzzi2021radar,wymeersch2020radio}. By systematically varying these configurations, the RIS can scan distinct portions of the scene, thereby enabling spatially selective interrogation of the environment.

The primary advantage of this scheme lies in its conceptual clarity and ease of implementation, as it exploits the flexible beamforming capabilities of RISs. However, it also exhibits inherent limitations. In practice, many existing RIS prototypes employ only a one-bit quantization scheme, with each unit cell switching between two phase states~\cite{he2022high}. Such extremely low-resolution quantization not only broadens the beamwidth but also degrades beam-steering accuracy, thereby restricting scanning granularity~\cite{keykhosravi2023leveraging,shekhawat2024millimeter}. Moreover, the grating and side lobes induced by this coarse quantization can introduce ambiguities in the localization process~\cite{vabichevich2023suppression}.

In the backward localization paradigm, the emphasis shifts from generating directive pencil beams to probe the signal distribution toward reconstructing the RoI from a sequence of carefully designed measurements. This approach draws on principles from inverse problem theory~\cite{chen2018computational}, computational imaging~\cite{imani2020review,saigre2022intelligent}, and the well-established compressed sensing framework~\cite{donoho2006compressed}. Within this framework, the RoI is illuminated using a set of diverse radiation patterns that linearly encode spatial information into a series of measurements~\cite{eldar2012compressed}. At the estimation or reconstruction stage, the collect measurements are processed to recover the RoI, including both the positions of sources and their associated power levels~\cite{10096085}. RISs constitute a natural hardware platform for implementing such diverse illumination strategies, as their reconfiguration capabilities enable the synthesis of spatially adaptive radiation patterns across the RoI.

Several notable studies have investigated this direction, showing that backward sensing through RISs can address tasks such as localization and angular information estimation~\cite{lin2021single,alexandropoulos2022localization,rinchi2022compressive,10096085}, computational imaging~\cite{jiang2024near,huang2024ris,li2024radio,hu2022metasketch}, and environmental mapping~\cite{torcolacci2024holographic,tong2021joint}. Early works utilize RISs for direction-of-arrival (DoA) estimation~\cite{lin2021single, alexandropoulos2022localization}, where exploiting sparsity in the angular domain further improves accuracy~\cite{rinchi2022compressive}. Going beyond the far-field assumption, Dardari et al.~\cite{dardari2021nlos} investigate near-field localization with large RISs, demonstrating that the wavefront curvature at high frequencies enables reliable single-anchor positioning even under NLOS conditions. Another line of research employs RISs for EM inverse scattering imaging~\cite{sun2024computational,torcolacci2024holographic,tong2021joint,hu2022metasketch,huang2024ris}. These studies formulate models that relate the scattered fields to object properties and leverage tools such as singular value decomposition \cite{sun2024computational} or optimization-based algorithms to enhance imaging performance \cite{tong2021joint,hu2022metasketch}. In~\cite{huang2024ris}, the authors introduce a user-moving multi-view imaging approach for RIS-aided communication systems. They employ a stacked measurements framework across multiple RIS configurations and adopt the point spread function concept to assess the system's imaging capabilities. Beyond single-RIS configurations, multi-RIS systems provide additional spatial diversity and extend the observable field~\cite{alexandropoulos2022localization,li2024radio,wang2025dreamer,9475155}. Related to this, several works draw inspiration from computed tomography, reconstructing fields through dense angular sampling~\cite{wilson2010radio,karl2023foundations}. Li et al.~\cite{li2024radio} further demonstrate that only a small number of RISs can achieve comparable imaging and sensing performance by exploiting both reconfiguration and spatial diversity, thereby reducing hardware complexity. 

Although the aforementioned studies have shown that RISs can, in principle, facilitate localization and related functionalities, most existing works remain at the proof-of-concept stage, relying primarily on numerical simulations without hardware validation. Furthermore, the theoretical foundations of RIS-centric backward localization remain conspicuously underdeveloped, with existing research offering only fragmented attempts at establishing rigorous analytical underpinnings~\cite{sun2024computational,mi2023towards}. The influence of key design factors has yet to be systematically examined. As a result, current design methodologies are largely heuristic, depending on empirical adjustments rather than being grounded in a robust theoretical framework~\cite{he2022high,9475155,wilson2010radio}.

The difficulty in establishing rigorous theoretical foundations arises from the intrinsic complexity of RIS-mediated wave interactions. 
Although RISs fundamentally operate on the principle of linear superposition, the emergent behaviors are highly nontrivial. The system's overall response results from a complex interplay among tightly coupled factors, including the configuration and topology of RISs, and the geometry of the RoI. Without explicitly expressing the interaction in terms of these factors, quantitative evaluation of key performance metrics-such as resolution, robustness, and scalability-remains infeasible. Currently, the field lacks a unifying theoretical framework capable of guiding system design, optimizing performance trade-offs, and informing practical implementation strategies.

\subsection{Contributions} 
The main contributions are summarized as follows: 

\begin{itemize}
\item \textbf{Multi-RIS backward sensing framework enabled by the inherent structure underlying the measurement procedure:} We develop physically accurate and mathematically concise models to characterize the forward signal aggregation process in RIS-centric localization. Building on our prior work, which was limited to single-snapshot formulations with a single RIS~\cite{mi2023towards}, this study reveals the inherent structure of the sensing operator in the multi-snapshot measurement process involving one or multiple RISs. This structured representation provides a principled framework for systematically analyzing the key performance indicators that determine the performance of backward localization.

\item \textbf{Key indicators dominating the performance of backward sensing:} 
A theoretical framework based on singularity analysis is established to identify the key performance indicators governing backward sensing performance. In the context of source localization and power estimation with multiple RISs, we derive the law governing the relationship among the region of interest (RoI) size, the number of elements, and the number of measurements via rank analysis. A necessary condition for high-fidelity recovery is that the  size of the RoI must be smaller than both the number of elements and the number of measurements. Upper bounds on the relative error of the least squares (LS) estimation algorithm are derived through resolution analysis. These bounds illustrate how key performance indicators govern the error and provide valuable guidance for system-level optimization. Numerical experiments confirm that the trend of the relative error perfectly aligns with that of the theoretical bound. 

\item \textbf{Proof-of-concept prototype validation:} 
Extensive experiments validate the effectiveness of the proposed RIS-centric backward sensing approach. For backward sensing with multiple RISs, numerical trials illustrate the impact of several factors, including including the number of elements, the number of measurements, the number of RISs, and their topology. To demonstrate the practicality of our approach, we develop a proof-of-concept prototype using universal software radio peripherals (USRP) and employ a magnitude-only reconstruction algorithm tailored to the system. To the best of our knowledge, no previous prototypes have been validated under such configurations. 

\end{itemize}

\subsection{Outline}
The remainder of the paper is organized as follows. Section~\ref{S:ForwardAggregation} presents forward signal aggregation models. Section~\ref{S:BackwardSensing} addresses the RIS-centric backward sensing problem. Section~\ref{S:KeyIndicators} discusses key performance indicators, and Section~\ref{S:Experiments} validates our findings through numerical experiments. In Section~\ref{S:POC}, we demonstrate the practicality of our approach with a USRP-based prototype, and Section~\ref{S:Conclusion} concludes the paper.

\subsection{Notations}

Unless explicitly specified, bold capital letters and bold small letters denote matrices and vectors, respectively. The conjugate transpose, transpose, and pseudo-inverse of $\mathbf{A}$ are denoted by~$\mathbf{A}^{*}$,~$\mathbf{A}^\top$ and~$\mathbf{A}^{\dag}$, respectively. We denote by $\| \cdot \|$ the Euclidean norm of a vector and the spectral norm of a matrix. The notation $\text{diag}(\mathbf{a})$ represents a diagonal matrix with diagonal elements $\mathbf{a}$, and rank\{$\mathbf{A}$\} denotes the rank of $\mathbf{A}$. Additionally, the main notations used in this paper and their physical meanings are summarized as Table~\ref{tab:notations}.

\begin{table*}[!htbp] 
  \caption{Physical Meanings of the Main Notations}
    \label{tab:notations}
  \centering
  \setlength{\tabcolsep}{4pt}  
  \begin{tabular}{@{}l c l c@{}}
    \toprule
    \textbf{Physical Meaning} & \textbf{Notation} 
      & \textbf{Physical Meaning} & \textbf{Notation} \\
    \midrule
    Number of RIS units & $N$ & Number of measurements & $T$ \\
    Size of the RoI & $M$ & Number of RIS blocks & $K$ \\
    Incident signal & $\mathbf{E}$ & Observed signal across $T$ measurements & $\mathbf{S}$  \\ 
    Incident angle & $\theta_i$, $\phi_i$ & Reflected angle & $\theta_s$, $\phi_s$\\
    Set of incident angles & $\mathbf{\Theta}^\text{i}$,$\mathbf{\Phi}^\text{i}$  & Set of reflected angles & $\mathbf{\Theta}^\text{s}$ , $\mathbf{\Phi}^\text{s}$\\
    Phase-difference vector & $\mathbf{v} ( \theta^\text{s}, \phi^\text{s} )$ & Phase-difference matrix of linear array & $\mathbf{V} ( \mathbf{\Theta}^\text{i})$ \\
    Phase-difference matrix of planar array & $\mathbf{V} ( \mathbf{\Theta}^\text{i}, \mathbf{\Phi}^\text{i} )$ &Position of the n-th unit ($x$, $y$, $z$ coordinates) & $\mathbf{p}_n = [x_n, \ y_n, \ z_n ]^{\top}$  \\
    Element spacing & $d$ & Unit direction vector in spherical coordinates  & $\mathbf{u} ( \theta, \phi)$\\ 
    Phase configuration of the $n$-th unit & $\Omega_n$ & Unit scattering pattern & $\tau$ \\ 
    Distance between the Tx and the RIS & $r^i$  & Distance between the Rx and the RIS & $r^s$ \\ 
    Path-loss factors of incident and reflected paths & $l(r^i)$, $l(r^s)$ & Signal-to-Noise Ratio & $SNR$ \\
    Noise term & $\nu_t$ & Noise variance & $\sigma^2$  \\
    Frequency of Tx signals & $f$ & Wavelength of Tx signals & $\lambda$  \\
    Sensing operator & $\mathbf{H} ( \mathbf{\Omega} )$ & Angle resolution and cross-range distance & $ \Delta$ and $\Delta_{\text{CR}}$ \\
    \bottomrule 
  \end{tabular}
\end{table*}

\section{Forward Signal Aggregations via A Single RIS}\label{S:ForwardAggregation}
In RIS-centric backward sensing, a deep understanding of the forward measurement process is essential. Mathematically, this process is formulated as $S_t = H_t(\mathbf{E})$, where $S_t$ represents the observation. The vector $\mathbf{E}\in \mathbb{C}^{M \times 1}$ describes the scene under investigation, with $M$ denoting the number of discretized points within the RoI. The sensing operator $H_t$ maps the scene to the observation. We employ $t=1, \ldots, T$ to index measurements corresponding to various configurations. In practice, the random phase configuration mode is widely used as a benchmark for evaluating sensing performance~\cite{mi2023towards,jiang2024near,sleasman2016microwave,dardari2021nlos}.

\subsection{Linear Aggregations of Multiple Incident Signals}
Our analysis begins by concentrating on a single RIS situated in the $xoy$-plane, composed of multiple units located at $\mathbf{p}_n = [x_n, \ y_n, \ z_n ]^{\top}$, $n=1, \ldots, N$. For simplicity, we assume that the scattering pattern of an isolated unit is denoted by $\tau(\theta^\text{s}, \phi^\text{s}; \theta^\text{i}, \phi^\text{i})$. This notation signifies the dependence on both the incident angle $(\theta^\text{i}, \phi^\text{i})$ and the scattered angle $(\theta^\text{s}, \phi^\text{s})$. We do not elaborate on the specific form $\tau(\cdot)$, since our primary objective is to illustrate the inherent structure of the sensing operator that governs the interaction among the array, incident waves, and scattered waves. Using the physical optics method, we calculate the bistatic scattered field of a rectangular metallic patch with near-zero thickness~\cite{mi2023towards, ozdogan2019intelligent}.

Let us consider an example illustrated in Fig.~\ref{F:ArrayGeometry}. The RIS is illuminated by a single incident EM wave originating from $(r^\text{i}, \theta^\text{i}, \phi^\text{i})$. Suppose the incident electric field arriving at $\mathbf{p}_n$ along $(\theta^\text{i} (n), \phi^\text{i} (n))$ is denoted by $E^\text{i} (n) $. Consequently, the scattered filed at $(\theta^\text{s} (n), \phi^\text{s} (n) )$ is represented as $E^\text{s} (n) = \tau ( \theta^\text{s} (n), \phi^\text{s} (n) ; \theta^\text{i} (n), \phi^\text{i} (n) ) E^\text{i} (n)$. As the EM wave propagates toward the unit at $\mathbf{p}_n$, the attenuation behavior is characterized by the factor $r^\text{i} (n) e^{ j 2 \pi r^\text{i} (n) / \lambda }$. Therefore, $E^\text{i} (n) = E ( r^\text{i}, \theta^\text{i}, \phi^\text{i} ) e^{ - j 2 \pi r^\text{i} (n) / \lambda}/ r^\text{i} (n)$. Similarly, if $E^\text{s} (n)$ denotes the scattered electric field of the unit at $\mathbf{p}_n$, then the electric field at the observation point is $S_n = E^\text{s} (n) e^{ - j 2 \pi r^\text{s} (n) / \lambda } / r^\text{s} (n)$. 

\begin{figure}[!htbp]
  \centering
  \includegraphics[width=0.8\linewidth]{}
  \caption{An RIS illuminated by a single incident EM wave from $(r^\text{i}, \theta^\text{i}, \phi^\text{i})$. The scattered waves from various elements are aggregated at $(r^\text{s}, \theta^\text{s}, \phi^\text{s})$.}
  \label{F:ArrayGeometry}
\end{figure}

If the phase configuration of the $n$-th unit is denoted by $\Omega_n$, the corresponding phase delay is expressed as $e^{j \Omega_n}$. In the following, we consider the most common case where the phase assumes random binary values from the set $\{0,\pi \}$. Combining the three components along the propagation path through the $n$-th unit at $\mathbf{p}_n$, the electric field at the observation point is expressed as 
\begin{multline}\label{SISOBehavior}
  S_n ( r^\text{s}_n, \theta^\text{s}, \phi^\text{s} ) = \tau ( \theta^\text{s} (n), \phi^\text{s} (n) ; \theta^\text{i} (n), \phi^\text{i} (n) ) e^{j \Omega_n} \\
  \frac{ e^{ - j 2 \pi r^\text{i} (n) / \lambda} e^{ - j 2 \pi r^\text{s} (n) / \lambda } }{ r^\text{i} (n) r^\text{s} (n) } E ( r^\text{i}, \theta^\text{i}, \phi^\text{i} ) . 
\end{multline}
When the RIS is exposed to multiple incident waves, the aggregated electric field at the observation point is determined by the superposition of the contributions from all units
\begin{equation}\label{Model_MISO}
    \begin{aligned}
      S ( r^\text{s}, \theta^\text{s}, \phi^\text{s} ) = 
      & \sum_{m=1}^{M} E ( r^\text{i}_m, \theta^\text{i}_m, \phi^\text{i}_m )  \\
      & \sum_{n=1}^{N} \tau ( \theta^\text{s} (n), \phi^\text{s} (n) ; \theta^\text{i}_m (n), \phi^\text{i}_m (n) ) \\ 
      & e^{j \Omega_n} \frac{ e^{ - j 2 \pi r^\text{i}_m (n) / \lambda} e^{ - j 2 \pi r^\text{s} (n) / \lambda } }{ r^\text{i}_m (n)  r^\text{s} (n)}.
    \end{aligned} 
\end{equation} 

Under specific scenarios, the input-output expression can be simplified, particularly when the RIS is composed of isotropic units. In isotropic scattering, incident EM waves are uniformly scattered in all directions over the reflection hemisphere, regardless of the angles of incidence or observation. Consequently, the unit's scattering pattern simplifies to  $\tau ( \theta^\text{s}, \phi^\text{s}; \theta^\text{i}, \phi^\text{i} ) = \tau$. The aggregated electric field is given by
\begin{equation}\label{Model_MISO2}
  \begin{aligned}
    S ( r^\text{s}, \theta^\text{s}, \phi^\text{s} ) = 
    & \sum_{m=1}^{M} E ( r^\text{i}_m, \theta^\text{i}_m, \phi^\text{i}_m )  \\
    & \sum_{n=1}^{N} \tau  e^{j \Omega_n} \frac{ e^{ - j 2 \pi r^\text{i}_m (n) / \lambda} e^{ - j 2 \pi r^\text{s} (n) / \lambda } }{ r^\text{i}_m (n)  r^\text{s} (n) }   . 
  \end{aligned}
\end{equation}

{In the far-field regime, the path loss between the units and the source can be decomposed into two components: amplitude attenuation and phase difference. Since the RIS is relatively small compared to the source distance, the amplitude attenuation across different units is nearly constant. Moreover, the phase delays resulting from the path differences can be expressed as steering vectors. Let $\mathbf{u} ( \theta, \phi) = [\sin \theta \cos \phi, \ \sin \theta \sin \phi, \ \cos \theta]^\top$. Then, the representation of the linear aggregation of incident waves is given in \eqref{E:MIMO}.} 

\begin{figure*}[!htbp]
  \begin{equation}\label{E:MIMO}
    \begin{aligned}
        S (r^\text{s}, \theta^\text{s}, \phi^\text{s} ) 
      \approx & 
      \tau 
      \underbrace{ \frac{ e^{-j 2 \pi r^\text{s} / \lambda }}{ r^\text{s} }  }_{\text{Scattered path loss factor } l (r^\text{s})}
      \underbrace{
        \begin{bmatrix}
        e^{ j 2 \pi \mathbf{p}_1^{\top} \mathbf{u} (\theta^\text{s}, \phi^\text{s})  / \lambda } & \cdots & e^{ j 2 \pi \mathbf{p}_N^\top \mathbf{u} (\theta^\text{s}, \phi^\text{s})  / \lambda }  
      \end{bmatrix} }_{ \text{Scattered phase difference vector } \mathbf{v} ( \theta^\text{s}, \phi^\text{s} )}
      \underbrace{
        \begin{bmatrix}
        e^{j \Omega_1} &        &  0 \\
                               & \ddots &    \\
            0                  &        & e^{j \Omega_N}
      \end{bmatrix} }_{\text{Phase configuration } \text{diag} (e^{j \mathbf{\Omega} } ) } \\
      & \underbrace{
        \begin{bmatrix}
        e^{ j 2 \pi \mathbf{p}_1^\top \mathbf{u} (\theta_1^{\text{i}}, \phi_1^\text{i})  / \lambda }  & \cdots  & e^{ j 2 \pi \mathbf{p}_1^\top \mathbf{u} (\theta_M^\text{i}, \phi_M^\text{i})  / \lambda } \\
        \vdots  & \ddots & \vdots \\
        e^{ j 2 \pi \mathbf{p}_N^\top \mathbf{u} (\theta_1^{\text{i}}, \phi_1^\text{i})  / \lambda }  & \cdots  & e^{ j 2 \pi \mathbf{p}_N^\top \mathbf{u} (\theta_M^\text{i}, \phi_M^\text{i})  / \lambda }\\
      \end{bmatrix} }_{\text{Incident phase difference matrix } \mathbf{V} ( \mathbf{\Theta}^\text{i}, \mathbf{\Phi}^\text{i} )}
      \underbrace{
        \begin{bmatrix}
        \frac{ e^{-j 2 \pi r^\text{i}_1 / \lambda } }{ r^\text{i}_1 } &        &  0 \\
                               & \ddots &    \\
            0                  &        &  \frac{ e^{-j 2 \pi r^\text{i}_M / \lambda } }{ r^\text{i}_M }
      \end{bmatrix} }_{\text{Incident path loss factor } \text{diag} (l (\mathbf{r}^\text{i})) }
      \underbrace{
      \begin{bmatrix}
        E (r_1^\text{i}, \theta_{1}^\text{i}, \phi_{1}^\text{i}) \\
        \vdots    \\
        E (r_M^\text{i}, \theta_{M}^\text{i}, \phi_{M}^\text{i})
      \end{bmatrix} }_{\mathbf{E} ( \mathbf{r}^\text{i}, \mathbf{\Theta}^\text{i}, \mathbf{\Phi}^\text{i} )}.
    \end{aligned}
  \end{equation}
  \hrule
\end{figure*}

\subsection{Configurational Diversity}

A significant advantage of the RIS-centric backward sensing approach is its configurational diversity, which enables the generation of a series of measurements (i.e., multiple snapshots), as illustrated in Fig.~\ref{F:SamplingOneAntenna}. We define $\mathbf{E} ( \mathbf{\Theta}^\text{i}, \mathbf{\Phi}^\text{i} ) = \text{diag} ( l (\mathbf{r}^\text{i}) ) \mathbf{E} ( \mathbf{r}^\text{i}, \mathbf{\Theta}^\text{i}, \mathbf{\Phi}^\text{i} )$ as the incident fields arriving at the RIS along $( \mathbf{\Theta}^\text{i}, \mathbf{\Phi}^\text{i} )$. By carefully revisiting \eqref{E:MIMO}, we observe that $\mathbf{v} ( \theta^\text{s}, \phi^\text{s} ) \text{diag} (e^{j \mathbf{\Omega} } ) = e^{j \mathbf{\Omega} } \text{diag} ( \mathbf{v} ( \theta^\text{s}, \phi^\text{s} ) )$. Accordingly, we express the measurement of the aggregated field for the $t$-th snapshot as
\begin{multline}
  S_t (r^\text{s}, \theta^\text{s}, \phi^\text{s} ) = \tau l (r^\text{s}) e^{j \mathbf{\Omega}_t }  \text{diag} ( \mathbf{v} ( \theta^\text{s}, \phi^\text{s} ) ) \\
 \mathbf{V} ( \mathbf{\Theta}^\text{i}, \mathbf{\Phi}^\text{i} ) \mathbf{E} ( \mathbf{\Theta}^\text{i}, \mathbf{\Phi}^\text{i} ) + \nu_t, \ t = 1, \ldots, T . 
\end{multline}

Here, $\nu_t$ represents the noise, which includes modeling inaccuracies due to approximations in the formulation as well as other inherent uncertainties in the measurements. Without loss of generality, we assume the noise follows a distribution with zero mean and variance $\sigma^2$.

\begin{figure}[!htbp] 
  \centerline{\includegraphics[width=0.9\columnwidth]{./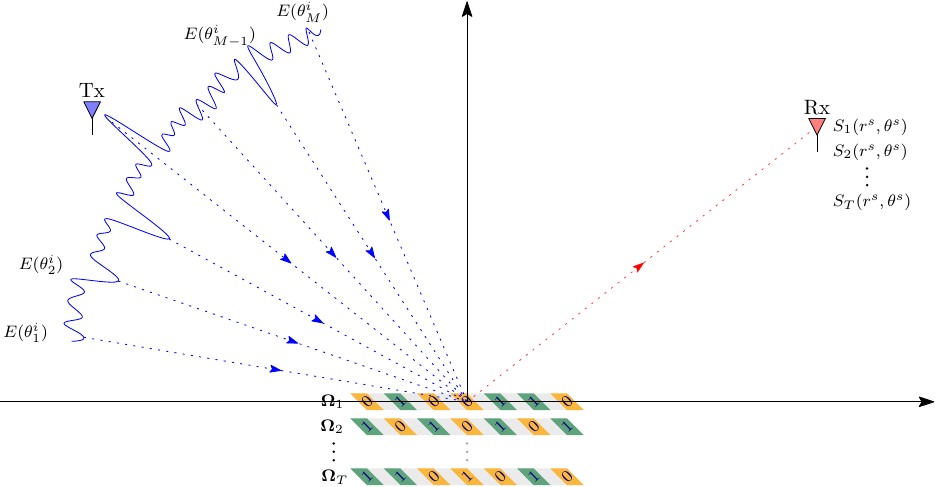}} 
  \caption{Uniform discretization across the azimuth angle domain (RoI). A total of $T$ snapshots are captured corresponding to various configurations.}
  \label{F:SamplingOneAntenna}
\end{figure}

\begin{figure*}[!htbp]
  \begin{equation}\label{E:MultiShot} 
    \begin{aligned}
      \underbrace{
        \begin{bmatrix}
        S_1 (r^\text{s}, \theta^\text{s}, \phi^\text{s}) \\
        \vdots    \\
        S_T (r^\text{s}, \theta^\text{s}, \phi^\text{s})
      \end{bmatrix} }_{\mathbf{S} (r^\text{s}, \theta^\text{s}, \phi^\text{s} ) }
      \approx & 
      \tau \underbrace{ \frac{ e^{-j 2 \pi r^\text{s} / \lambda }}{ r^\text{s} } }_{ \text{Scattered path loss factor } l (r^\text{s}) }
      \underbrace{\begin{bmatrix}
        e^{j \Omega_{1,1}} & \cdots &  e^{j \Omega_{1,N}} \\
        \vdots         & \ddots &  \vdots  \\
        e^{j \Omega_{T,1}} & \cdots &  e^{j \Omega_{T,N}} \\
      \end{bmatrix}}_{\text{Phase configuration matrix }\mathbf{e}^{j \mathbf{\Omega} }} 
      \underbrace{\begin{bmatrix}
        e^{ j 2 \pi \mathbf{p}_1^{\top} \mathbf{u} (\theta^\text{s}, \phi^\text{s})  / \lambda } &        &  0 \\
                               & \ddots &    \\
            0                  &        & e^{ j 2 \pi \mathbf{p}_N^\top \mathbf{u} (\theta^\text{s}, \phi^\text{s})  / \lambda }
      \end{bmatrix}}_{\text{Scattered phase difference matrix } \text{diag} (\mathbf{v} ( \theta^\text{s}, \phi^\text{s} ))} \\
      &
      \underbrace{
      \begin{bmatrix} 
        e^{ j 2 \pi \mathbf{p}_1^\top \mathbf{u} (\theta_1^{\text{i}}, \phi_1^\text{i})  / \lambda }  & \cdots  & e^{ j 2 \pi \mathbf{p}_1^\top \mathbf{u} (\theta_M^\text{i}, \phi_M^\text{i})  / \lambda } \\
        \vdots  & \ddots & \vdots \\
        e^{ j 2 \pi \mathbf{p}_N^\top \mathbf{u} (\theta_1^{\text{i}}, \phi_1^\text{i})  / \lambda }  & \cdots  & e^{ j 2 \pi \mathbf{p}_N^\top \mathbf{u} (\theta_M^\text{i}, \phi_M^\text{i})  / \lambda }\\
      \end{bmatrix} }_{\text{Incident phase difference matrix } \mathbf{V} ( \mathbf{\Theta}^\text{i}, \mathbf{\Phi}^\text{i} )}
      \underbrace{
      \begin{bmatrix}
        \frac{ e^{-j 2 \pi r^\text{i}_1 / \lambda } }{ r^\text{i}_1 } &        &  0 \\
                               & \ddots &    \\
            0                  &        &  \frac{ e^{-j 2 \pi r^\text{i}_M / \lambda } }{ r^\text{i}_M }
      \end{bmatrix} }_{ \text{Incident path loss factor } \text{diag} (l (\mathbf{r}^\text{i})) }
      \underbrace{
      \begin{bmatrix}
        E (r_1^\text{i}, \theta_{1}^\text{i}, \phi_{1}^\text{i}) \\
        \vdots    \\
        E (r_M^\text{i}, \theta_{M}^\text{i}, \phi_{M}^\text{i})
      \end{bmatrix} }_{\mathbf{E} ( \mathbf{r}^\text{i}, \mathbf{\Theta}^\text{i}, \mathbf{\Phi}^\text{i} )}.
    \end{aligned}
  \end{equation}
  \hrule
\end{figure*}

Assuming that the RoI remains stable across snapshots, all $T$ measurements can be arranged into a column vector, yielding the linear representation in \eqref{E:MultiShot}. Let the sensing operator $\mathbf{H} ( \mathbf{\Omega} ) \in \mathbb{C}^{T \times M}$ associated with these snapshots be defined as
\begin{equation}\label{E:H_Omega} 
  \mathbf{H} ( \mathbf{\Omega} ) = \tau l (r^\text{s}) \mathbf{e}^{j \mathbf{\Omega} } \text{diag} \left( \mathbf{v} ( \theta^\text{s}, \phi^\text{s} ) \right) \mathbf{V} ( \mathbf{\Theta}^\text{i}, \mathbf{\Phi}^\text{i} ),
\end{equation}
so that the canonical representation for the measurement procedure is given by
\begin{equation}\label{E:MultiShot2}
  \mathbf{S} (r^\text{s}, \theta^\text{s}, \phi^\text{s} ) = \mathbf{H} ( \mathbf{\Omega} ) \mathbf{E} ( \mathbf{\Theta}^\text{i}, \mathbf{\Phi}^\text{i} ) + \mathbf{n}.
\end{equation}
Here $\mathbf{S} (r^\text{s}, \theta^\text{s}, \phi^\text{s} ) = \left[ S_1 (r^\text{s}, \theta^\text{s}, \phi^\text{s} ), \cdots, S_T (r^\text{s}, \theta^\text{s}, \phi^\text{s} ) \right]^\top$ and $\mathbf{n} = \left[ \nu_1, \cdots, \nu_T \right]^\top$.

{One advantage of the canonical representations in \eqref{E:MIMO} and \eqref{E:MultiShot} is that they provide valuable insight into how RISs interact with both incident and scattered waves. Furthermore, \eqref{E:MultiShot} provides a clear depiction of the underlying structure of the measurement procedure facilitated by RISs. The sensing operator is expressed as the product of a series of matrices, each with a distinct physical interpretation and a limited set of key parameters. This structural representation lays the foundation for systematically analyzing the key indicators influencing performance in backward sensing tasks.

Specifically, the incident and scattered path loss factors $l (\mathbf{r}^\text{i})$ and $l (r^\text{s})$ represent the attenuations and phase shifts along the paths from the sources and receivers to the RIS, respectively. The incident and scattered phase difference factors $\mathbf{V} ( \mathbf{\Theta}^\text{i}, \mathbf{\Phi}^\text{i} )$ and $\mathbf{v} ( \theta^\text{s}, \phi^\text{s} )$ capture the phase differences arising from path variations among different units. For uniformly linear arrays, these factors exhibit well-defined structures. The incident phase difference matrix simplifies to a Vandermonde matrix, as shown
\begin{equation}\label{E:Vandermonde}
  \mathbf{V} ( \mathbf{\Theta}^\text{i})
  = \begin{bmatrix}
    1                                                   & \cdots  & 1                                                   \\
    e^{ j 2 \pi d \sin \theta^{\text{i}}_1 / \lambda }  & \cdots  &  e^{ j 2 \pi d \sin ( \theta^{\text{i}}_M ) / \lambda }             \\
    \vdots                                              & \ddots  & \vdots                                           \\
    e^{ j 2 \pi (N-1) d \sin \theta^{\text{i}}_1 / \lambda } & \cdots & e^{ j 2 \pi (N-1) d \sin ( \theta^{\text{i}}_M ) / \lambda }
\end{bmatrix} .
\end{equation}
This matrix depends solely on the angles and remains independent of the distances. Additionally, the phase configuration matrix $\mathbf{e}^{j \mathbf{\Omega} }$ is well-conditioned when the phase configurations are random binary values, discussed in Section \ref{Spatial Resolution}.

Finally, we note that in \eqref{E:MultiShot2}, $\mathbf{E} ( \mathbf{\Theta}^\text{i}, \mathbf{\Phi}^\text{i} )$ represents the incident fields arriving at the RIS along $( \mathbf{\Theta}^\text{i}, \mathbf{\Phi}^\text{i} )$. In other words, $l(\mathbf{r}^\text{i})$ serves as a distance normalization term in the DoA estimation through a single RIS. This term becomes particularly significant in sensing scenarios involving multiple RISs, where large-scale attenuation and phase shift factors can be detected across different RISs.
}

\section{Backward Sensing Using Multiple RISs} \label{S:BackwardSensing}

Before exploring the capabilities enabled by deploying multiple RISs, we first clarify the DoA estimation function using a single RIS, where the vector $\mathbf{E} ( \mathbf{\Theta}^\text{i}, \mathbf{\Phi}^\text{i} )$ is unknown and must be estimated.

\subsection{DoA Estimation via a Single RIS}

With the model in \eqref{E:MultiShot2}, if the sensing operator $\mathbf{H} ( \mathbf{\Omega} )$ is known, $\mathbf{E} ( \mathbf{\Theta}^\text{i}, \mathbf{\Phi}^\text{i} )$ can be efficiently estimated. According to \eqref{E:H_Omega}, this requires knowledge of $l (r^\text{s})$, $\mathbf{v} ( \theta^\text{s}, \phi^\text{s} )$, $\mathbf{e}^{j \mathbf{\Omega} }$, and $\mathbf{V} ( \mathbf{\Theta}^\text{i}, \mathbf{\Phi}^\text{i} )$. The terms $l (r^\text{s})$ and $\mathbf{v} ( \theta^\text{s}, \phi^\text{s} )$ depend on the location of the receiver, while $\mathbf{e}^{j \mathbf{\Omega} }$ is determined by the configuration. In backward sensing, both terms are readily available.

The situation becomes more complicated for $\mathbf{V} ( \mathbf{\Theta}^\text{i}, \mathbf{\Phi}^\text{i} )$ as this matrix is determined by the directions of the incident waves, which constitutes the central task in DoA estimation. In practice, a commonly used approach involves exploring densely sampled grids across the angular domain, as illustrated in Fig.~\ref{F:SamplingOneAntenna}. While this strategy expands the solution space, the extensive reconfiguration capabilities of RISs ensure that the problem remains well-posed in most scenarios.

{ Without making any prior assumptions about the RoI, when the sensing operator $\mathbf{H} ( \mathbf{\Omega} )$ is known, ensuring that a set of measurements $\mathbf{S} ( r^\text{s}, \theta^\text{s}, \phi^\text{s} )$ uniquely determines a solution for $\mathbf{E} ( \mathbf{\Theta}^\text{i}, \mathbf{\Phi}^\text{i} )$ requires that the rank of the sensing operator $\mathbf{H} ( \mathbf{\Omega} )$ be at least equal to the size of the RoI.} In a typical scenario where the RIS is configured to produce a well-conditioned, full column rank sensing matrix $\mathbf{H} ( \mathbf{\Omega} )$, a reliable estimation of the incident field is given by 
\begin{equation}\label{E:InverseSensing}
  \widehat{\mathbf{E}} ( \mathbf{\Theta}^\text{i}, \mathbf{\Phi}^\text{i} ) = \bigl( \mathbf{H} ( \mathbf{\Omega} )^* \mathbf{H} ( \mathbf{\Omega} ) \bigr)^{-1} \mathbf{H} ( \mathbf{\Omega} )^* \mathbf{S} ( r^\text{s}, \theta^\text{s}, \phi^\text{s} ) .
\end{equation}

\subsection{Cooperative Localization Using Multi-RIS}
Similar to DoA estimation, where incident angles are discretized, cooperative localization with multiple RISs also requires volumetric discretization. A common approach is to perform discretization directly in Cartesian coordinates, typically using uniform voxels, as illustrated in Fig.~\ref{F:TwoModes}. Unlike scenarios involving a single RIS, the operational mode of receivers is essential in sensing with multiple RISs. The simplest mode assigns a dedicated receiver to each RIS, as shown in Fig.~\ref{F:Mode1}. Other modes are also possible; for instance, the reflected signals from all RISs can be aggregated at a single receiver, as illustrated in Fig.~\ref{F:Mode2}. For scenarios involving more than two RISs, a hybrid operational mode can be employed, where some RISs have dedicated receivers while others share a common receiver. Next, we explore the representation of measurement operators in the context of multiple RISs, noting that the form of this operator depends on the chosen operational mode.

\begin{figure}[!htbp]
  \centering
  \subfigure[]{
  \label{F:Mode1}
  \includegraphics[width=.47\columnwidth]{./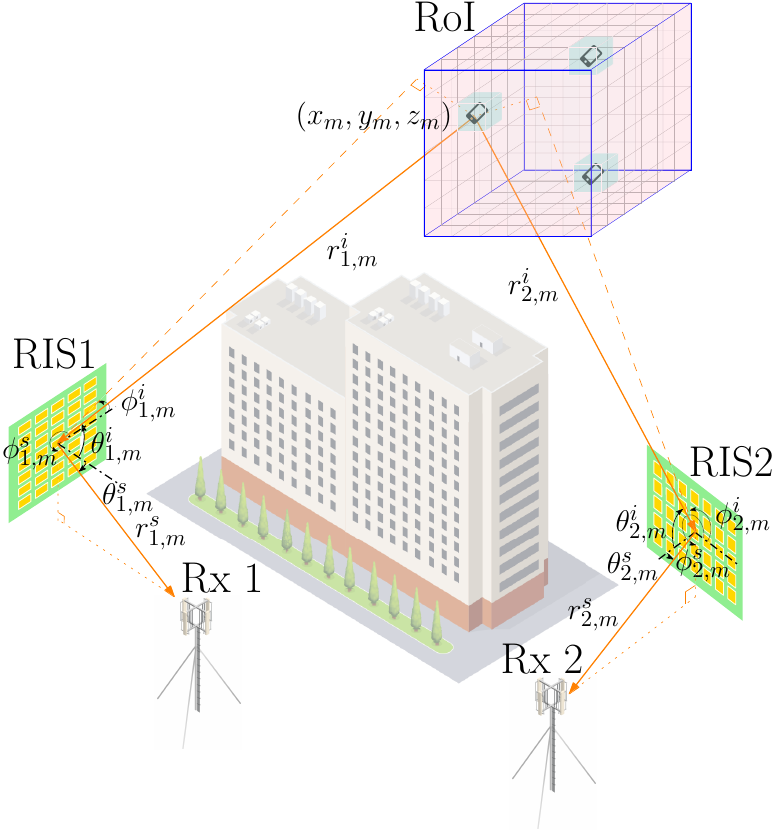}}
  \subfigure[]{
  \label{F:Mode2}
  \includegraphics[width=.47\columnwidth]{./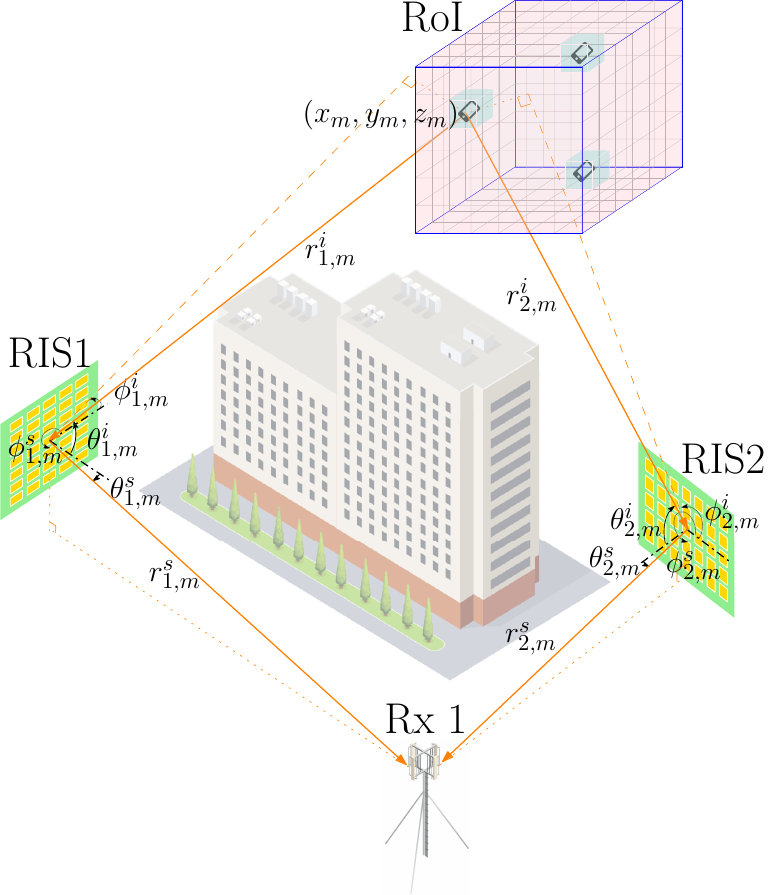}}
  \caption{Two typical operational modes in the RIS-centric sensing system with direct discretization in Cartesian coordinates. (a) Each RIS is equipped with a dedicated receiver, and (b) the reflected signals from all RISs are aggregated at a single receiver.} 
  \label{F:TwoModes} 
\end{figure}

We denote the incident field in Cartesian coordinates as $\mathbf{E} ( \mathbf{x}, \mathbf{y}, \mathbf{z} )$ and use the subscript $k$ to index RISs. The input-output relationship for each individual RIS is given by
\[
  \mathbf{S} (r^\text{s}_k, \theta^\text{s}_k, \phi^\text{s}_k ) = \mathbf{H} ( \mathbf{\Omega}_k ) \text{diag} ( l (\mathbf{r}^\text{i}_k ) ) \mathbf{E} ( \mathbf{x}, \mathbf{y}, \mathbf{z} ) + \mathbf{n}_k .
\]
For the mode where each RIS has a dedicated receiver, the measurement vectors corresponding to each RIS can be stacked into a column vector. Consequently, the measurement process can be expressed as
\begin{equation}\label{E:Mode_1}
  \begin{bmatrix}
    \mathbf{S} (r^\text{s}_1, \theta^\text{s}_1, \phi^\text{s}_1 ) \\
    \vdots \\
    \mathbf{S} (r^\text{s}_K, \theta^\text{s}_K, \phi^\text{s}_K )
  \end{bmatrix}
  =
  \begin{bmatrix}
    \mathbf{H} ( \mathbf{\Omega}_1 ) \text{diag} ( l (\mathbf{r}^\text{i}_1 ) ) \\
    \vdots \\
    \mathbf{H} ( \mathbf{\Omega}_K ) \text{diag} ( l (\mathbf{r}^\text{i}_K ) )
  \end{bmatrix} 
  \mathbf{E} ( \mathbf{x}, \mathbf{y}, \mathbf{z} ) + \mathbf{n}.
\end{equation}
On the other hand, when the sensing system consists of a single receiver serving all RISs, the measurement process can be represented as
\begin{equation}\label{E:Mode_2}
\begin{aligned}
\mathbf{S} ( r^\text{s}, \theta^\text{s}, \phi^\text{s} ) 
= & \mathbf{S} (r^\text{s}_1, \theta^\text{s}_1, \phi^\text{s}_1 ) + \cdots + \mathbf{S} (r^\text{s}_K, \theta^\text{s}_K, \phi^\text{s}_K ) \\
= &  \bigl[ \mathbf{H} ( \mathbf{\Omega}_1 ) \text{diag} ( l (\mathbf{r}^\text{i}_1 ) ) + \cdots \\
& + \mathbf{H} ( \mathbf{\Omega}_K ) \text{diag} ( l (\mathbf{r}^\text{i}_K ) ) \bigr] \mathbf{E} ( \mathbf{x}, \mathbf{y}, \mathbf{z} ) + \mathbf{n}.
\end{aligned}
\end{equation}
When receivers operate in a hybrid mode, where some RISs have dedicated receivers while others share a common receiver, the sensing operator combines elements of both \eqref{E:Mode_1} and \eqref{E:Mode_2}. 

Although the sensing operator varies with the specific geometries, topologies, configurations, and operational modes, the measurement process involving multiple RISs can generally be expressed as
\begin{equation}\label{E:LinearMeasurement}
  \mathbf{S} = \mathbf{H} \mathbf{E} + \mathbf{n} .
\end{equation}
The radio source localization and power reconstruction problem is formulated as follows. Given the observation $\mathbf{S}$ and the sensing operator $\mathbf{H}$, determine an appropriate $\mathbf{E}$ that satisfies \eqref{E:LinearMeasurement}. Similar to DoA estimation using a single RIS, if the length of the observation vector is greater than the degree of RoI, and the sensing operator is of full column rank and well-conditioned, the most straightforward estimation of the RoI is the LS solution to \eqref{E:LinearMeasurement}, which is expressed as
\begin{equation}\label{E:InverseSensingMultiple}
  \hat {\mathbf{E}} = \mathbf{H}^\dag \mathbf{S} = \bigl( \mathbf{H}^* \mathbf{H}  \bigr)^{-1} \mathbf{H}^* \mathbf{S} .
\end{equation}

One advantage of the LS approach is its simplicity and broad applicability. Furthermore, well-established analytical tools are available to assess the performance of LS solutions. While various reconstruction algorithms can yield high-fidelity solutions, exploring them is beyond the scope of this paper.

\section{Key Performance Indicators}\label{S:KeyIndicators}  

We now identify the key indicators that govern the performance of the proposed RIS-centric sensing approaches. An effective sensing operator should efficiently encode a large RoI while remaining robust to observation noise. Mathematically, the operator's rank characterizes its ability to capture the degrees of freedom of the RoI. Thus, a good sensing operator should exhibit a high rank, ensuring that RoIs with diverse patterns are measured effectively. Another critical consideration is noise sensitivity. A robust RoI estimation approach must be inherently insensitive to noise. Numerically, when the sensing operator is of full column rank and well-conditioned, its minimum singular value quantifies the sensitivity of LS estimation to noise. Consequently, we evaluate performance from both the rank and minimum singular value perspectives.

\subsection{Rank of the Sensing Operators}

Suppose the sensing system consists of $K$ RISs, with each RIS comprising $N_k$ elements. These RISs are collaboratively configured to sense an RoI with size $M$. In the operational mode where each RIS has its dedicated receiver, the number of measurements for each RIS is denoted by $T_k$. In the setup where the sensing system has only one receiver, the number of measurements is denoted by $T$. The following theorem clarifies the rank of the sensing operators.

\begin{thm}\label{T:1}
  In the mode where each RIS has its dedicated receiver, the rank of the sensing matrix is bounded above by the size of the RoI, the total number of measurements, and the total number of elements. Specifically, we have
  \[
    \normalfont 
    \text{rank}(\mathbf{H}) \le \min \left \{ M, \sum_{k=1}^K \min \{ T_k, N_k \} \right \}.
  \]
  Furthermore, in the scenario where the sensing system utilizes only one receiver, the rank of the sensing matrix is bounded by
  \[
    \normalfont 
    \text{rank}(\mathbf{H}) \le \min \left \{ M, T,  \sum_{k=1}^K N_k \right \}.
  \]
\end{thm}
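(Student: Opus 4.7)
The plan is to exploit the product structure of the per-RIS operator in \eqref{E:H_Omega} so that the stacked sensing matrix $\mathbf{H}$ in each mode factors as a product of two matrices whose intermediate dimension directly yields the stated bound. Concretely, for each RIS $k$ I would write
\[
\mathbf{H}(\mathbf{\Omega}_k)\,\text{diag}(l(\mathbf{r}^\text{i}_k)) = \mathbf{A}_k\mathbf{B}_k,
\]
where $\mathbf{A}_k := \tau l(r^\text{s}_k)\,\mathbf{e}^{j\mathbf{\Omega}_k}$ has size $T_k \times N_k$ and $\mathbf{B}_k := \text{diag}(\mathbf{v}(\theta^\text{s}_k,\phi^\text{s}_k))\,\mathbf{V}(\mathbf{\Theta}^\text{i},\mathbf{\Phi}^\text{i})\,\text{diag}(l(\mathbf{r}^\text{i}_k))$ has size $N_k \times M$. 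The two diagonal factors absorbed into $\mathbf{B}_k$ have only nonzero diagonal entries (unit-modulus phases and strictly positive reciprocal distances), so they are invertible and do not affect rank.

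For the dedicated-receiver mode \eqref{E:Mode_1}, I would assemble $\mathbf{H} = \mathbf{A}\mathbf{B}$ with $\mathbf{A} = \text{blockdiag}(\mathbf{A}_1,\ldots,\mathbf{A}_K)$ of size $(\sum_k T_k) \times (\sum_k N_k)$ and $\mathbf{B} = [\mathbf{B}_1^\top,\ldots,\mathbf{B}_K^\top]^\top$ of size $(\sum_k N_k) \times M$. Applying the standard inequality $\text{rank}(\mathbf{A}\mathbf{B}) \le \min\{\text{rank}(\mathbf{A}),\text{rank}(\mathbf{B})\}$, together with $\text{rank}(\mathbf{A}) = \sum_k \text{rank}(\mathbf{A}_k) \le \sum_k \min\{T_k,N_k\}$ (the rank of a block-diagonal matrix is the sum of its block ranks) and the trivial $\text{rank}(\mathbf{B}) \le M$ from its column count, yields the first claim.

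For the single-receiver mode \eqref{E:Mode_2}, the same atomic blocks reassemble as $\mathbf{H} = [\mathbf{A}_1\ \cdots\ \mathbf{A}_K]\,[\mathbf{B}_1^\top,\ldots,\mathbf{B}_K^\top]^\top$, where the left factor has size $T \times (\sum_k N_k)$ and the right factor has size $(\sum_k N_k) \times M$. Bounding the rank of each factor by its smaller dimension and again applying $\text{rank}(\mathbf{A}\mathbf{B})\le\min\{\text{rank}(\mathbf{A}),\text{rank}(\mathbf{B})\}$ gives $\text{rank}(\mathbf{H}) \le \min\{T, \sum_k N_k, M\}$.

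The step requiring the most care is choosing the correct \emph{narrow-throat} factorization in each mode so that $\sum_k \min\{T_k,N_k\}$ (respectively $\sum_k N_k$) appears as the intermediate dimension rather than as a looser outer bound; in particular, in mode~2 one must resist the temptation to bound the rank of the sum $\sum_k \mathbf{A}_k\mathbf{B}_k$ term by term, which would give only $\sum_k\min\{T,N_k,M\}$, and instead recognize the horizontal/vertical concatenation. Once this is done, the argument reduces to the elementary rank inequality for products and the invertibility of the embedded diagonal factors, which is immediate.
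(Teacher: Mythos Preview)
Your proposal is correct and follows essentially the same approach as the paper: both exploit the product structure in \eqref{E:H_Omega} to see that each per-RIS block factors through an $N_k$-dimensional space, so its rank is at most $\min\{T_k,N_k\}$, and then combine across the $K$ blocks. Your packaging via a single global factorization $\mathbf{H}=\mathbf{A}\mathbf{B}$ (block-diagonal $\mathbf{A}$ in mode~1, horizontally concatenated $\mathbf{A}$ in mode~2) is slightly cleaner and more explicit for the single-receiver case than the paper, which bounds each block separately, sums over the stacking, and then simply remarks that the single-receiver case is similar.
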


\begin{proof}
We first prove the result for the mode where each RIS has its dedicated receiver. Since $\mathbf{H}$ consists of $M$ columns, it directly follows that $\text{rank}(\mathbf{H}) \le M$. { For each $k$, since $\text{diag} ( l (\mathbf{r}^\text{i}_k ) )$ has full rank, we have
\[
\text{rank} \left ( \mathbf{H} ( \mathbf{\Omega}_k ) \text{diag} ( l (\mathbf{r}^\text{i}_k ) ) \right) = \text{rank} \left ( \mathbf{H} ( \mathbf{\Omega}_k )  \right) \le \min \{ T_k, N_k \} .
\]
This inequality arises from the fact that $\mathbf{H} ( \mathbf{\Omega}_k )$ is a product of three matrices, as illustrated in \eqref{E:MultiShot}. The rank of the phase configuration matrix $\mathbf{e}^{j \mathbf{\Omega}_k }$ is bounded by $T_k$, and the rank of the scattered phase difference matrix is $N_k$.} Since $\mathbf{H}$ is the stacking of $K$ such matrices, we obtain 
\[
  \text{rank}(\mathbf{H}) \le \sum_{k=1}^K \min \{ T_k, N_k \} .
\]
A similar proof holds for the mode where the sensing system uses a single receiver.
\end{proof}

In RIS-centric backward sensing schemes, the rank of the sensing matrix limits the number of degrees of freedom that can be measured or encoded from the scene. Theorem~\ref{T:1} asserts that, to achieve high-fidelity recovery of the RoI using the classic LS algorithm, a necessary condition is that the size of the RoI must be smaller than both the total number of elements and the number of measurements. This implies that the size of the RoI is inherently constrained by the physical size of the RISs.

\subsection{The Spatial Resolution Analysis} \label{Spatial Resolution}

We now perform a resolution analysis of LS estimations. The \textit{relative error} of \eqref{E:InverseSensing} and \eqref{E:InverseSensingMultiple} can be expressed as
\begin{equation}\label{E:RelativeError} 
  \begin{aligned}
  \frac{ \lVert \hat {\mathbf{E}} - \mathbf{E} \rVert } { \lVert \mathbf{E} \rVert } 
  = & \frac{ \lVert {\mathbf{H}}^{\dag} \mathbf{H} \mathbf{E} - \mathbf{E} + {\mathbf{H}}^{\dag} \mathbf{n} \rVert } { \lVert \mathbf{E} \rVert } \\
  \le & \frac{ \lVert {\mathbf{H}}^{\dag} \mathbf{H} \mathbf{E} - \mathbf{E} \rVert + \lVert {\mathbf{H}}^{\dag} \mathbf{n} \rVert } { \lVert \mathbf{E} \rVert } \\
  \le & \lVert {\mathbf{H}}^{\dag} \mathbf{H} - \mathbf{I} \rVert + \lVert {\mathbf{H}}^{\dag} \rVert \lVert \mathbf{n} \rVert / \lVert \mathbf{E} \rVert \\
  \approx & \lVert {\mathbf{H}}^{\dag} \mathbf{H} - \mathbf{I} \rVert + \sqrt{T} \lVert {\mathbf{H}}^{\dag} \rVert / \sqrt{\mathrm{SNR}} .
  \end{aligned}
\end{equation}
For simplicity, we define $\mathrm{SNR} = \lVert \mathbf{E} \rVert^2 / \sigma^2$. The relative error is bounded above by the sum of two terms. When the sensing operator $\mathbf{H}$ is of full column rank and well-conditioned, ensuring $\lVert {\mathbf{H}}^{\dag} \mathbf{H} - \mathbf{I} \rVert = 0$ becomes straightforward. Furthermore, if $\lVert {\mathbf{H}}^{\dag} \rVert$ remains bounded by a reasonable value, then recovery using the LS solution guarantees high fidelity. It is worth noting $\lVert {\mathbf{H}}^{\dag} \rVert = 1 / \sigma_{\text{min}} ( \mathbf{H} )$. We now proceed to examine the minimum singular value of the sensing operator $\mathbf{H}$.

In a sensing or imaging system, the minimum resolvable distance is defined as the spatial resolution~\cite{mercuri2017frequency}. This definition stems from the observation that when two point sources are in close proximity, the system cannot distinguish them as separate entities. To simplify the analysis, we focus on investigating the angular resolution associated with backward sensing using a single linear RIS. Consider a scenario where two incident waves impinge on the RIS, one along $\theta^{\text{i}}$ and the other along $\theta^{\text{i}}+\Delta$. In this case, the sensing operator, as derived in \eqref{E:H_Omega}, simplifies to
\[
  \mathbf{H} ( \theta^{\text{i}}, \theta^{\text{i}}+\Delta ) = \tau l (r^\text{s}) \mathbf{e}^{j \mathbf{\Omega} } \text{diag} \left( \mathbf{v} ( \theta^\text{s} ) \right) \mathbf{V} ( \theta^{\text{i}}, \theta^{\text{i}}+\Delta ).
\]

\begin{lem}\label{L:3}
If $\mathbf{B}$ has full column rank, then $\sigma_{\text{min}} ( \mathbf{A} \mathbf{B} ) \ge \sigma_{\text{min}} ( \mathbf{A} ) \sigma_{\text{min}} ( \mathbf{B} )$.
\end{lem}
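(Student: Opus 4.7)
The plan is to rely on the variational characterization
\[
\sigma_{\min}(\mathbf{M}) = \min_{\|\mathbf{x}\| = 1} \|\mathbf{M x}\|,
\]
and to factor $\|\mathbf{ABx}\|$ into two pieces, one governed by $\mathbf{A}$ and the other by $\mathbf{B}$. The first step is to invoke the full column rank of $\mathbf{B}$ to conclude that $\mathbf{B x} \neq \mathbf{0}$ for every non-zero $\mathbf{x}$, so that $\mathbf{B x}$ may be safely normalized.

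Next, for an arbitrary unit vector $\mathbf{x}$, I would write
\[
\|\mathbf{A B x}\| \;=\; \|\mathbf{B x}\| \cdot \left\| \mathbf{A} \frac{\mathbf{B x}}{\|\mathbf{B x}\|} \right\|,
\]
and bound each factor separately. The second factor is $\mathbf{A}$ acting on a unit vector, so it is at least $\sigma_{\min}(\mathbf{A})$ by the variational formula applied to $\mathbf{A}$. The first factor is at least $\sigma_{\min}(\mathbf{B})$ by the same formula applied to $\mathbf{B}$, which is legitimate precisely because $\mathbf{B}$ has full column rank. Multiplying the two lower bounds and minimizing over $\|\mathbf{x}\|=1$ then yields the claim.

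The only delicate point, and the part most likely to cause trouble if one is not careful, is verifying that the variational formula actually equals $\sigma_{\min}$ for each factor: $\min_{\|\mathbf{x}\|=1}\|\mathbf{M x}\|$ collapses to zero whenever $\mathbf{M}$ is fat with non-trivial null space. For $\mathbf{B}$ this is guaranteed by the full column rank hypothesis stated in the lemma; for $\mathbf{A}$ it is consistent with the sensing matrices introduced in Sections~\ref{S:ForwardAggregation} and~\ref{S:BackwardSensing}, which are tall by construction. A completely hypothesis-intrinsic derivation would instead chain the compact SVDs of $\mathbf{A}$ and $\mathbf{B}$ and invoke unitary invariance of the spectral norm, but the variational route above is the shortest and makes transparent why the hypothesis on $\mathbf{B}$ is exactly what the argument requires.
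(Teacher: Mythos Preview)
Your proposal is correct and is essentially the paper's own argument: the paper uses the ratio form $\sigma_{\min}(\mathbf{M})=\min_{x\neq 0}\lVert\mathbf{M}x\rVert/\lVert x\rVert$, notes that full column rank of $\mathbf{B}$ gives $\mathbf{B}x\neq 0$, splits $\lVert\mathbf{AB}x\rVert/\lVert x\rVert=(\lVert\mathbf{AB}x\rVert/\lVert\mathbf{B}x\rVert)(\lVert\mathbf{B}x\rVert/\lVert x\rVert)$, and lower-bounds each factor separately. Your caveat about $\mathbf{A}$ possibly being fat is a fair observation, but the paper simply adopts the convention $\sigma_{\min}(\mathbf{A})=\min_{y\neq 0}\lVert\mathbf{A}y\rVert/\lVert y\rVert$ without further comment, which is consistent with its intended tall factors.
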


\begin{proof}
Given that $\mathbf{B}$ has full column rank, for any $x \neq 0$, we have $\mathbf{B} x \neq 0$. Then,
  \begin{multline*}
    \sigma_{\text{min}} ( \mathbf{A} \mathbf{B} ) 
    = \min_{x \neq 0} \frac{ \lVert \mathbf{A} \mathbf{B} x \rVert }{ \lVert x \rVert } 
    = \min_{x \neq 0} \frac{ \lVert \mathbf{A} \mathbf{B} x \rVert }{ \lVert \mathbf{B} x \rVert } \frac{ \lVert \mathbf{B} x \rVert }{ \lVert x \rVert } \\
    \ge \min_{y \neq 0} \frac{ \lVert \mathbf{A} y \rVert }{ \lVert y \rVert } \min_{x \neq 0} \frac{ \lVert \mathbf{B} x \rVert }{ \lVert x \rVert }
    = \sigma_{\text{min}} ( \mathbf{A} ) \sigma_{\text{min}} ( \mathbf{B} ) .
  \end{multline*}
\end{proof}

When $\Delta \neq 0$, the matrix $\mathbf{V} ( \theta^{\text{i}}, \theta^{\text{i}}+\Delta )$ is of full column rank, and the matrix $\text{diag} \left( \mathbf{v} ( \theta^\text{s} ) \right)$ is of full rank. Therefore, we have
\begin{equation}\label{E:SingularValueBound}
  \begin{aligned}
    & \sigma_{\text{min}} ( \mathbf{H} (\theta^{\text{i}}, \theta^{\text{i}}+\Delta) ) \\
    \ge & \lvert \tau l (r^\text{s}) \rvert \sigma_{\text{min}} ( \mathbf{e}^{j \mathbf{\Omega} } )  \sigma_{\text{min}} \left( \text{diag} ( \mathbf{v} ( \theta^\text{s} ) ) \right) \sigma_{\text{min}} \left( \mathbf{V} ( \theta^{\text{i}}, \theta^{\text{i}}+\Delta ) \right) \\ 
    = & \lvert \tau l (r^\text{s}) \rvert \sigma_{\text{min}} ( \mathbf{e}^{j \mathbf{\Omega} } ) \sigma_{\text{min}} \left( \mathbf{V} ( \theta^{\text{i}}, \theta^{\text{i}}+\Delta ) \right) .
  \end{aligned}
\end{equation}

\begin{lem}\label{L:4}
Consider a Vandermonde matrix given by
\begin{equation}\label{E:VandermodeMatrix}
  \mathbf{V} ( \theta^{\mathrm{i}}, \theta^{\mathrm{i}}+\Delta ) =
  \begin{bmatrix}
    1                                                   &  1                                                   \\
    e^{ j 2 \pi d \sin \theta^{\mathrm{i}} / \lambda }             &  e^{ j 2 \pi d \sin ( \theta^{\mathrm{i}}+\Delta ) / \lambda }             \\
    \vdots                                              &  \vdots                                              \\
    e^{ j 2 \pi (N-1) d \sin \theta^{\mathrm{i}} / \lambda } &  e^{ j 2 \pi (N-1) d \sin ( \theta^{\mathrm{i}}+\Delta ) / \lambda }
\end{bmatrix} ,
\end{equation}
where $\Delta \neq 0$. The singular values of $\mathbf{V} ( \theta^{\mathrm{i}}, \theta^{\mathrm{i}}+\Delta )$ are given by
\begin{multline*}
  \sigma_{\text{max}} ( \mathbf{V} ( \theta^{\mathrm{i}}, \theta^{\mathrm{i}}+\Delta ) ) \\
  = \sqrt{ N + \left| \frac{ \sin ( \pi N d (\sin \theta^{\mathrm{i}} - \sin ( \theta^{\mathrm{i}}+\Delta ) ) / \lambda ) }{ \sin ( \pi d ( \sin \theta^{\mathrm{i}} - \sin ( \theta^{\mathrm{i}}+\Delta ) ) / \lambda ) } \right|  } ,
\end{multline*}
and 
\begin{multline}\label{E:LeastSingularValue1}
  \sigma_{\text{min}} ( \mathbf{V} ( \theta^{\mathrm{i}}, \theta^{\mathrm{i}}+\Delta ) ) \\
  = \sqrt{ N - \left| \frac{ \sin ( \pi N d (\sin \theta^{\mathrm{i}} - \sin ( \theta^{\mathrm{i}}+\Delta ) ) / \lambda ) }{ \sin ( \pi d ( \sin \theta^{\mathrm{i}} - \sin ( \theta^{\mathrm{i}}+\Delta ) ) / \lambda ) } \right|  } .
\end{multline}
\end{lem}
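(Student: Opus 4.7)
The plan is to compute the Gram matrix $\mathbf{V}^* \mathbf{V}$, which is only $2 \times 2$, and read off its eigenvalues; the singular values of $\mathbf{V}$ are then the square roots of those eigenvalues. Since each column of $\mathbf{V}$ has $N$ entries of unit modulus, both diagonal entries of $\mathbf{V}^* \mathbf{V}$ equal $N$. The off-diagonal entry is the geometric sum
\[
c \;=\; \sum_{n=0}^{N-1} e^{\,j 2\pi n d (\sin(\theta^{\mathrm{i}}+\Delta) - \sin \theta^{\mathrm{i}})/\lambda},
\]
which I would evaluate using the standard closed form $\sum_{n=0}^{N-1} e^{jn\alpha} = e^{j(N-1)\alpha/2}\,\sin(N\alpha/2)/\sin(\alpha/2)$, with $\alpha = 2\pi d(\sin(\theta^{\mathrm{i}}+\Delta)-\sin\theta^{\mathrm{i}})/\lambda$. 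Only the modulus $|c|$ will matter.

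Next I would diagonalize the Hermitian matrix
\[
\mathbf{V}^* \mathbf{V} \;=\; \begin{bmatrix} N & c \\ \overline{c} & N \end{bmatrix},
\]
whose eigenvalues are $N \pm |c|$. Both are nonnegative because $|c| \le N$ by the triangle inequality (with equality iff $\alpha \in 2\pi\mathbb{Z}$, which is excluded once $\Delta \neq 0$ in the regime where $\sin\theta^{\mathrm{i}} \neq \sin(\theta^{\mathrm{i}}+\Delta)$). Taking square roots and substituting
\[
|c| \;=\; \left| \frac{ \sin ( \pi N d (\sin \theta^{\mathrm{i}} - \sin ( \theta^{\mathrm{i}}+\Delta ) ) / \lambda ) }{ \sin ( \pi d ( \sin \theta^{\mathrm{i}} - \sin ( \theta^{\mathrm{i}}+\Delta ) ) / \lambda ) } \right|
\]
(the sign of $\alpha$ is absorbed by the absolute value) yields the claimed formulas for $\sigma_{\max}$ and $\sigma_{\min}$.

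The argument is essentially a Dirichlet-kernel calculation combined with an elementary $2 \times 2$ eigenvalue computation, so there is no real obstacle; the only point requiring a moment of care is verifying that $\sin(\alpha/2) \neq 0$, so that the closed form of the geometric sum is valid, and that $|c| < N$ strictly, so that $\sigma_{\min} > 0$. Both hold whenever $d(\sin\theta^{\mathrm{i}} - \sin(\theta^{\mathrm{i}}+\Delta))/\lambda \notin \mathbb{Z}$, which is the generic condition under which the two steering vectors are linearly independent and hence $\mathbf{V}(\theta^{\mathrm{i}},\theta^{\mathrm{i}}+\Delta)$ has full column rank, as already assumed in the statement.
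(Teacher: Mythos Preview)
Your proposal is correct and follows essentially the same route as the paper: compute the $2\times 2$ Gram matrix $\mathbf{V}^*\mathbf{V}$, identify the diagonal entries as $N$ and the off-diagonal entry as a geometric sum evaluated via the Dirichlet-kernel identity, and read off the eigenvalues $N\pm|c|$. The paper's appendix carries out exactly this computation, so there is nothing to add beyond the extra care you already take in noting when $\sin(\alpha/2)\neq 0$ and $|c|<N$.
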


\begin{proof}
See the Appendix~\ref{S:Appendix_A3}
\end{proof}

\begin{lem}\label{L:5}
For $x$ sufficiently small, we have
\begin{equation}\label{E:sin_sin}
  \frac{ \sin (N x) }{\sin (x)} \approx N - \frac{ N ( N^2 - 1 ) x^2 }{ 6 } .
\end{equation}
\end{lem}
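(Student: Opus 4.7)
The plan is to derive the stated approximation by Taylor-expanding both the numerator $\sin(Nx)$ and the denominator $\sin(x)$ about $x=0$ and then forming the quotient, retaining only terms up to order $x^2$.

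First I would invoke the Maclaurin series $\sin(t) = t - t^3/6 + O(t^5)$ with $t = Nx$ and $t = x$, respectively. Factoring out $Nx$ from the numerator and $x$ from the denominator leaves a common multiplicative factor of $N$ and reduces the problem to analyzing
\[
\frac{1 - N^2 x^2 / 6 + O(x^4)}{1 - x^2 / 6 + O(x^4)} .
\]

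Next I would invert the denominator via the geometric expansion $(1-u)^{-1} = 1 + u + O(u^2)$ with $u = x^2/6$, multiply the two series, and collect terms up to order $x^2$. The $-N^2 x^2/6$ arising from the numerator combines with the $+x^2/6$ produced by inverting the denominator to yield $-(N^2 - 1) x^2 / 6$; after restoring the prefactor $N$, one obtains the advertised coefficient $-N(N^2 - 1)/6$, with residual error of order $x^4$.

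No step poses a real obstacle: the argument is a routine Taylor-series manipulation. The only mild care required is to propagate the $O(x^5)$ remainders in $\sin(Nx)$ and $\sin(x)$ consistently through the division, so that the overall error is $O(x^4)$ and hence negligible relative to the retained $O(x^2)$ correction for $x$ sufficiently small. Since the lemma only claims an approximation (indicated by $\approx$), this level of bookkeeping is sufficient, and no additional ingredients beyond elementary calculus are needed.
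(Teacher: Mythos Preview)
Your argument is correct and self-contained; the Taylor expansion of numerator and denominator followed by a geometric-series inversion cleanly produces the stated $N - N(N^2-1)x^2/6$ with an $O(x^4)$ remainder. The paper, however, takes a different route: it first rewrites $\sin(Nx)/\sin(x)$ as the Dirichlet-type sum $\sum_{l=0}^{N-1} e^{i(N-1-2l)x}$, groups conjugate terms into cosines, and then Taylor-expands each cosine to obtain the full coefficient sequence $a_k = \frac{(-1)^k}{(2k)!}\sum_{l=0}^{N-1}(N-1-2l)^{2k}$, from which $a_0=N$ and $a_1=-N(N^2-1)/6$ are read off. Your method is shorter and more elementary for the two-term approximation actually needed, while the paper's approach has the advantage of delivering a closed-form expression for \emph{every} even-order coefficient in the expansion, should higher-order corrections ever be required.
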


\begin{proof}
  See the Appendix~\ref{S:Appendix_A4}
\end{proof}

Combining \eqref{E:LeastSingularValue1} and \eqref{E:sin_sin} yields an approximation for the smallest singular value of $\mathbf{V}(\theta^{\text{i}}, \theta^{\text{i}}+\Delta)$
\begin{equation}\label{E:ApproximationV}
\begin{aligned}
  & \sigma_{\text{min}} ( \mathbf{V} ( \theta^{\text{i}}, \theta^{\text{i}}+\Delta ) ) \\
  \approx & \frac{ \pi }{ \sqrt{6} } \frac{ d }{ \lambda } \sqrt{ N ( N^2 - 1 ) } \lvert \sin \theta^{\text{i}} - \sin ( \theta^{\text{i}} + \Delta ) \rvert  \\
  \approx & \frac{ \pi }{ \sqrt{6} } \frac{ d }{ \lambda } \sqrt{ N ( N^2 - 1 ) } \lvert \Delta \rvert \cos \theta^{\text{i}} .
\end{aligned}
\end{equation}
In the last step, we utilize the first-order Taylor expansion of $\sin ( \theta^{\text{i}} + \Delta )$ at $\theta^{\text{i}}$.

Next, we will examine $\sigma_{\text{min}} ( \mathbf{e}^{j \mathbf{\Omega} } )$ using the Marchenko-Pastur law~\cite{marchenko1967distribution}, which characterizes the asymptotic behavior of singular values of large rectangular random matrices. Let's consider a random matrix $\mathbf{X} \in \mathbb{C}^{T \times N}$, whose entries are independent identically distributed random variables with mean 0 and variance 1. We define the empirical spectral distribution of $\mathbf{Y}_N = \mathbf{X}^* \mathbf{X} / T$ as 
$\mu_{ \mathbf{Y}_N } \equiv \frac{1}{N} \sum_{i=1}^{N} \delta_{\lambda_i( \mathbf{Y}_N )}$,
where $\lambda_1( \mathbf{Y}_N ) \le \cdots \le \lambda_N( \mathbf{Y}_N )$ are the eigenvalues (including multiplicity) and $\delta_{ \lambda_i( \mathbf{Y}_N ) }(x)$ is the indicator function $\mathbf{1}_{ \lambda_i( \mathbf{Y}_N ) \le x }$. 

A celebrated theorem in random matrix theory, known as the Marchenko-Pastur distribution, describes the limiting spectral distribution of  $\mathbf{Y}_N$ \cite{marchenko1967distribution, anderson2010introduction}. Under the Kolmogorov condition, i.e., $T, N \to \infty$ and $N / T \to c \in (0, 1)$, $\mu_{ \mathbf{Y}_N } $ converges weakly and almost surely to $\mu_{\text{M-P}}$, with the density $ d \mu_{\text{M-P}} (x) = \frac{1}{2 \pi c x} \sqrt{(x-a)^{+}(b-x)^{+}} d x $. Here, $a = (1 - \sqrt{c})^2$, $b = (1 + \sqrt{c})^2$, and $a^{+} = \max (0, a)$. The Marchenko-Pastur distribution implies, as $T, N \to \infty$, the probability of the eigenvalues of $\mathbf{Y}_N$ lying outside the interval $[a, b]$ approaches zero. 

We now turn our attention to the matrix $\mathbf{e}^{j \mathbf{\Omega}}$, whose entries are independent identically distributed random variables with mean 0 and variance 1. By leveraging the Marchenko-Pastur law, we can obtain a good approximation for its smallest singular value in the regime of large $T$ and $N$,
\begin{equation}\label{E:ApproximationE}
  \sigma_{\text{min}} ( \mathbf{e}^{j \mathbf{\Omega} } ) \approx \sqrt{T a} = \sqrt{T} \left( 1 - \sqrt{ \frac{N}{T}} \right) = \sqrt{T} - \sqrt{N}. 
\end{equation}

Substituting \eqref{E:ApproximationV} and \eqref{E:ApproximationE} into \eqref{E:SingularValueBound} yields
\begin{equation*}
\begin{aligned}
  & \sigma_{\text{min}} ( \mathbf{H} (\theta^{\text{i}}, \theta^{\text{i}}+\Delta) ) \\
  \gtrsim & \frac{ \pi }{ \sqrt{6} } \tau ( r^\text{s} )^{-1} \frac{ d }{ \lambda }  ( \sqrt{T} - \sqrt{N} ) \sqrt{ N ( N^2 - 1 ) } \lvert \Delta \rvert \cos \theta^{\text{i}}  \\
  \approx & \frac{ \pi }{ \sqrt{6} } \tau ( r^\text{s} )^{-1} \frac{ d }{ \lambda }  \left( 1 - \sqrt{ \frac{N}{T} } \right) T^{ \frac{1}{2} } N^{\frac{3}{2}} \lvert \Delta \rvert \cos \theta^{\text{i}}.
\end{aligned}
\end{equation*}
It then follows that
\begin{multline*}
  \left\lVert {\mathbf{H} (\theta^{\text{i}}, \theta^{\text{i}}+\Delta) }^+ \right\rVert \lesssim  \frac{\sqrt{6}}{\pi} r^\text{s} \tau^{-1} \left( \frac{ d }{ \lambda } \right)^{-1}  \left( 1 - \sqrt{ \frac{N}{T} } \right)^{-1} \\
  T^{ - \frac{1}{2} } N^{ - \frac{3}{2} } \lvert \Delta \rvert ^{-1} ( \cos \theta^{\text{i}} )^{-1}. 
\end{multline*}
We finally obtain the main theorem, as detailed below.

\begin{thm}\label{T:RelativeError}
If two closely incident waves along $\theta^{\mathrm{i}}$ and $\theta^{\mathrm{i}}+\Delta$ impinge on a uniform linear RIS, the {\it relative error} for the LS estimation, as given in \eqref{E:InverseSensing}, is approximately bounded by
\begin{multline}\label{E:RelativeErrorBound}
  \frac{ \lVert \hat {\mathbf{E}} - \mathbf{E} \rVert } { \lVert \mathbf{E} \rVert } \lesssim \frac{\sqrt{6}}{\pi} r^{\mathrm{s}}  \tau^{-1} \left( \frac{ d }{ \lambda } \right)^{-1} \left( 1 - \sqrt{ \frac{N}{T} } \right)^{-1} \\
  N^{ - \frac{3}{2} } \lvert \Delta \rvert^{-1} ( \cos \theta^{\mathrm{i}} )^{-1} \mathrm{SNR}^{-1/2}.
\end{multline}
\end{thm}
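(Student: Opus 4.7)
The plan is to assemble the bound directly from the ingredients already prepared in the preceding text: the general relative-error inequality \eqref{E:RelativeError}, the multiplicative singular-value bound of Lemma~\ref{L:3}, the closed form for $\sigma_{\text{min}}(\mathbf{V})$ in Lemma~\ref{L:4}, the small-argument expansion of Lemma~\ref{L:5}, and the Marchenko--Pastur asymptotics for $\sigma_{\text{min}}(\mathbf{e}^{j\mathbf{\Omega}})$. The thrust of the argument is that under the two closely-spaced incident waves the sensing operator factors as a product of three matrices, each of which can be bounded below individually, and the product of these bounds controls $\|\mathbf{H}^{\dagger}\|$.

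First, I would start from \eqref{E:RelativeError} and observe that, since $\mathbf{V}(\theta^{\text{i}},\theta^{\text{i}}+\Delta)$ has full column rank whenever $\Delta \neq 0$, and since $\text{diag}(\mathbf{v}(\theta^{\text{s}}))$ and $\mathbf{e}^{j\mathbf{\Omega}}$ are (with high probability for the latter) of full column rank, the composite $\mathbf{H}(\theta^{\text{i}},\theta^{\text{i}}+\Delta)$ also has full column rank. Hence $\mathbf{H}^{\dagger}\mathbf{H} = \mathbf{I}$ and the first summand in \eqref{E:RelativeError} vanishes, leaving $\sqrt{T}\,\|\mathbf{H}^{\dagger}\|/\sqrt{\mathrm{SNR}} = \sqrt{T}/(\sigma_{\text{min}}(\mathbf{H})\sqrt{\mathrm{SNR}})$ as the quantity to be estimated.

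Second, I would lower-bound $\sigma_{\text{min}}(\mathbf{H})$ via \eqref{E:SingularValueBound}: apply Lemma~\ref{L:3} iteratively to the three-factor product in \eqref{E:H_Omega}, noting that $\text{diag}(\mathbf{v}(\theta^{\text{s}}))$ is unitary-diagonal with $\sigma_{\text{min}} = 1$, so it contributes no factor. This reduces the problem to lower-bounding $\sigma_{\text{min}}(\mathbf{V})$ and $\sigma_{\text{min}}(\mathbf{e}^{j\mathbf{\Omega}})$. For the former I would substitute Lemma~\ref{L:4} into Lemma~\ref{L:5} (with $x = \pi d(\sin\theta^{\text{i}} - \sin(\theta^{\text{i}}+\Delta))/\lambda$) to obtain the approximation \eqref{E:ApproximationV}, and finally use $\sin(\theta^{\text{i}}+\Delta)-\sin\theta^{\text{i}} \approx \Delta\cos\theta^{\text{i}}$. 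For the latter, I would invoke the Marchenko--Pastur law for the Bernoulli-$\{-1,+1\}$ matrix $\mathbf{e}^{j\mathbf{\Omega}}$ to obtain $\sigma_{\text{min}}(\mathbf{e}^{j\mathbf{\Omega}}) \approx \sqrt{T} - \sqrt{N}$ as in \eqref{E:ApproximationE}.

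Finally, plugging these two approximations into \eqref{E:SingularValueBound} gives
\[
\sigma_{\text{min}}(\mathbf{H}) \gtrsim \frac{\pi}{\sqrt{6}}\,|\tau|\,l(r^{\text{s}})\,\frac{d}{\lambda}\,(\sqrt{T}-\sqrt{N})\,N^{3/2}\,|\Delta|\cos\theta^{\text{i}},
\]
and inserting this (with $l(r^{\text{s}}) = 1/r^{\text{s}}$) together with the $\sqrt{T}$ factor into the residual term of \eqref{E:RelativeError} yields \eqref{E:RelativeErrorBound} after writing $\sqrt{T}/(\sqrt{T}-\sqrt{N}) = (1-\sqrt{N/T})^{-1}$ and canceling the $\sqrt{T}$ inside $N^{3/2}T^{1/2}$ against the $\sqrt{T}$ in the numerator. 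The main obstacle is tracking the asymptotic regime carefully: the Marchenko--Pastur approximation is only accurate when $T,N \to \infty$ with $N/T \to c \in (0,1)$, and the Taylor expansions for $\sin$ and $\sin(Nx)/\sin(x)$ both demand that $\Delta$ be small relative to $\lambda/(Nd)$. These conditions justify the $\lesssim$ notation rather than a strict inequality, and the bound should be read as a first-order approximation valid in this joint asymptotic and near-coincident-source regime.
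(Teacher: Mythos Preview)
Your proposal is correct and follows essentially the same approach as the paper: both arguments start from the general relative-error estimate \eqref{E:RelativeError}, drop the first summand by full-column-rank of $\mathbf{H}$, apply Lemma~\ref{L:3} to the factorization \eqref{E:H_Omega} to obtain \eqref{E:SingularValueBound}, then estimate $\sigma_{\text{min}}(\mathbf{V})$ via Lemmas~\ref{L:4}--\ref{L:5} and $\sigma_{\text{min}}(\mathbf{e}^{j\mathbf{\Omega}})$ via the Marchenko--Pastur edge, and finally combine with the $\sqrt{T}$ cancellation to reach \eqref{E:RelativeErrorBound}. The only cosmetic difference is that you make the vanishing of $\lVert \mathbf{H}^{\dagger}\mathbf{H}-\mathbf{I}\rVert$ explicit, whereas the paper leaves it implicit in the full-column-rank discussion preceding \eqref{E:RelativeError}.
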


{ If the cross-range distance between two sources, perpendicular to the azimuth direction, is denoted by $\Delta_{\text{CR}}$, then $\lvert \Delta \lvert \approx \lvert \Delta_\text{CR} \rvert / r^\mathrm{i}$. The following corollary can be derived.

\begin{cor}\label{T:RelativeError2}
If two incident waves, originating from $r^\mathrm{i}$ with a cross-range distance $\Delta_{\text{CR}}$, impinge on a uniform linear RIS, the {\it relative error} for the LS estimation is approximately bounded by
  \begin{multline}\label{E:RelativeErrorBound2}
    \frac{ \lVert \hat {\mathbf{E}} - \mathbf{E} \rVert } { \lVert \mathbf{E} \rVert } \lesssim \frac{\sqrt{6}}{\pi} r^{\mathrm{i}} r^{\mathrm{s}}  \tau^{-1} \left( \frac{ d }{ \lambda } \right)^{-1} \left( 1 - \sqrt{ \frac{N}{T} } \right)^{-1} \\
    N^{ - \frac{3}{2} } \lvert \Delta_{\text{CR}} \rvert^{-1} ( \cos \theta^{\mathrm{i}} )^{-1} \mathrm{SNR}^{-1/2}.
  \end{multline}
\end{cor}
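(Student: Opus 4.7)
The plan is to reduce Corollary~\ref{T:RelativeError2} to Theorem~\ref{T:RelativeError} by a purely geometric substitution; all of the analytic heavy lifting (the Vandermonde lower bound in Lemma~\ref{L:4}, the small-argument expansion in Lemma~\ref{L:5}, and the Marchenko--Pastur estimate for $\sigma_{\text{min}}(\mathbf{e}^{j\mathbf{\Omega}})$) has already been carried out upstream, so the task here is essentially bookkeeping.

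First I would invoke Theorem~\ref{T:RelativeError}, which supplies a bound on $\lVert \hat{\mathbf{E}} - \mathbf{E}\rVert / \lVert \mathbf{E}\rVert$ that contains the factor $|\Delta|^{-1}$ arising from the angular separation between the two incident waves. Everything else in~\eqref{E:RelativeErrorBound} is independent of how the separation is parameterized.

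Next I would translate angular separation into cross-range distance. Two point sources placed at range $r^{\mathrm{i}}$ from the RIS and offset by $\Delta_{\text{CR}}$ in the direction perpendicular to the boresight subtend an angular difference $\Delta$ at the RIS satisfying $\tan \Delta = \Delta_{\text{CR}}/r^{\mathrm{i}}$. In the far-field regime $|\Delta_{\text{CR}}| \ll r^{\mathrm{i}}$, where our whole plane-wave model is already valid, the first-order expansion $\tan \Delta \approx \Delta$ gives $|\Delta| \approx |\Delta_{\text{CR}}|/r^{\mathrm{i}}$, which is exactly the relation flagged in the sentence immediately preceding the corollary. Substituting this into~\eqref{E:RelativeErrorBound} simply replaces $|\Delta|^{-1}$ by $r^{\mathrm{i}} |\Delta_{\text{CR}}|^{-1}$, pulling an extra factor of $r^{\mathrm{i}}$ to the front while leaving $r^{\mathrm{s}}$, $\tau^{-1}$, $(d/\lambda)^{-1}$, the Marchenko--Pastur factor $(1 - \sqrt{N/T})^{-1}$, $N^{-3/2}$, $(\cos \theta^{\mathrm{i}})^{-1}$, and $\mathrm{SNR}^{-1/2}$ untouched. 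This is precisely~\eqref{E:RelativeErrorBound2}.

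There is no genuine obstacle here; the only point worth checking is that the small-angle linearization $|\Delta| \approx |\Delta_{\text{CR}}|/r^{\mathrm{i}}$ is consistent with the other linearizations already made in the derivation of Theorem~\ref{T:RelativeError}, in particular the Taylor expansion $\sin(\theta^{\mathrm{i}}+\Delta) - \sin \theta^{\mathrm{i}} \approx \Delta \cos \theta^{\mathrm{i}}$ used after~\eqref{E:ApproximationV}. Both are first-order expansions in $\Delta$ and therefore valid in the same regime, so compounding them only introduces errors of order $\Delta^{2}$, which are absorbed by the $\lesssim$ in the corollary.
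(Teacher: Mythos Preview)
Your proposal is correct and follows essentially the same approach as the paper: the corollary is obtained from Theorem~\ref{T:RelativeError} by the single geometric substitution $|\Delta| \approx |\Delta_{\text{CR}}|/r^{\mathrm{i}}$, which the paper states in one line immediately before the corollary. Your additional remarks on the consistency of the small-angle approximations are a welcome clarification but not required by the paper's argument.
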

}

\subsection{Scaling Behavior Analysis} 
Theorem~\ref{T:RelativeError} and Corollary~\ref{T:RelativeError2} elucidate the scaling law governing the relative error in backward sensing using a single RIS. As shown in \eqref{E:RelativeErrorBound}, three key factors, aside from the SNR, influence the relative error. The first category relates to the topology and geometry of the RIS, including parameters such as $\tau$ (the scattering pattern of the element), $N$ (the number of elements) and $d$ (the spacing between elements). Rewriting $( d / \lambda )^{-1} N^{ - \frac{3}{2} }$ as $( N d / \lambda )^{-1} N^{ - \frac{1}{2} }$ reveals that the bound scales inversely with $N d$, which represents the RIS aperture. Unlike forward beamforming, where undesirable grating lobes arise when $d / \lambda > 1/2$, there is no such constraint on the element spacing $d$ in backward sensing. In fact, increasing $d$ can enhance the aperture and, consequently, improve the performance. However, increasing $N$ (the number of elements) may yield even greater benefits. This is because the power reflected by RISs grows quadratically with the total collecting area of the elements.

The second category influencing sensing performance is related to the incident angles and the receiver's position. As indicated in \eqref{E:RelativeErrorBound}, the recovery error is directly proportional to $r^\text{s}$. This is because, as the receiver approaches the RIS, it captures more power, leading to a reduction in recovery error. The term $( \cos \theta^{\mathrm{i}} )^{-1}$ indicates that recovery performance degrades as the scanned range deviates from the RIS's normal direction. The terms $\lvert \Delta \rvert^{-1}$ and $\lvert \Delta_{\text{CR}} \rvert^{-1}$ provide insight into the spatial resolution. Specifically, for small values of $\lvert \Delta \rvert$, the matrix $\mathbf{V} ( \theta^{\text{i}}, \theta^{\text{i}}+\Delta )$ becomes ill-conditioned, which in turn causes the sensing operator $\mathbf{H} ( \theta^{\text{i}}, \theta^{\text{i}}+\Delta )$ to be ill-conditioned as well. In such cases, even small amounts of measurement noise can result in substantially larger recovery errors.

The third category involves the number of measurements. The scaling factor between the bound and the number of measurements $T$ is $( 1 - \sqrt{ N / T } )^{-1}$, which shows that the bound is inversely proportional to $1 - \sqrt{ N / T }$. When $T \gg N$, this term asymptotically approaches 1, its minimum value. As a general guideline, increasing the number of measurements significantly beyond the number of elements can improve the performance of backward sensing.

\subsection{Numerical Validation of the Relative Error Bound} 

We now validate the accuracy of the bounds presented in Theorem~\ref{T:RelativeError} and Corollary~\ref{T:RelativeError2}. To this end, we construct a scenario in which only two incident waves impinge on the RIS, and compare the relative error with the theoretical bound to illustrate the influence of key factors, including $N$, $d$, $T$, $\Delta_\text{CR}$, $\theta^\text{i}$, and SNR.

In the following tests, the RIS operates at a frequency of $f = 5.8 \, \text{GHz}$ and comprises unit cells with an effective area of $0.4\lambda \times 0.4\lambda$ (resulting in $\tau=0.16\lambda$), employing random phase configurations. The distance from the center of the RoI to the RIS is $r^\text{i} = 6 \, \text{m}$. The default values for the other parameters are $N = 160$, $T = 500$, $d = 2.6 \, \text{cm} \approx 0.5 \lambda$, $\theta^\text{i} = 0$, $\Delta_\text{CR} = 2 \, \text{cm}$,  $r_s = 1 \, \text{m}$, and SNR = 2000. In these tests, all parameters are held constant except for one, which is systematically varied to evaluate its effect. The relative error of LS estimation is computed as an average over 500 trials.

Fig.~\ref{all_resolution} presents the actual relative error (in blue) alongside the upper bound given by \eqref{E:RelativeErrorBound2} (in red). The plots clearly indicate that the trend of the actual relative error is in perfect agreement with that of the theoretical bound. In other words, the theoretical curve effectively captures the law governing the relationship between the error and the key performance indicators. These theoretical bounds, as given by \eqref{E:RelativeErrorBound} and \eqref{E:RelativeErrorBound2}, provide valuable guidance for system-level optimization. We also observe a gap between the relative error and the theoretical curve, which can be attributed to the relatively loose estimation of the minimum singular value of the sensing operator. If a tighter estimation were obtained, this gap could be reduced.

\begin{figure*}[!htbp] 
  \centerline{\includegraphics[width=2\columnwidth]{./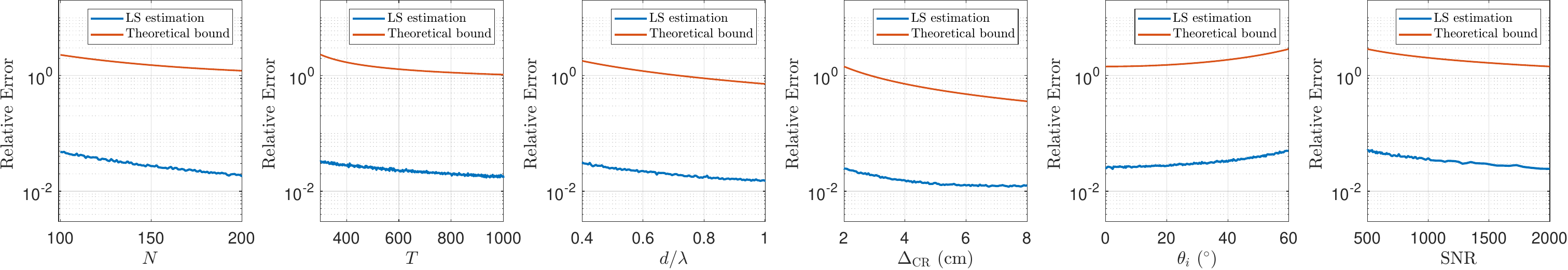}} 
  \caption{Comparison of simulation relative error and theoretical upper bound with respect to the number of elements \(N\), the number of measurements \(T\), the element spacing \(d\), the resolution of cross-range distance $\Delta_\text{CR}$, the incident angle $\cos(\theta_i)$ and the SNR, respectively.} 
  \label{all_resolution} 
\end{figure*}

\section{Performance Evaluation of Backward Sensing Using Multiple RISs} \label{S:Experiments}
In this section, we focus on validating the performance of backward sensing using multiple RISs. Specifically, our goal is to sense a region in which multiple power sources of various shapes are manually positioned. To assess the quality of backward sensing, we employ two metrics: relative error and the structural similarity index measure (SSIM).

The SSIM is a perceptual metric to quantify the degradation of the recovered field $\hat {\mathbf{E}}$ compared to the ground truth $\mathbf{E}$, defined as
\[
  \text{SSIM}=\frac{ \left( 2 \mu_{ \mathbf{E} } \mu_{ \hat {\mathbf{E}} } + c_1 \right) \left( 2 \sigma_{ \mathbf{E} \hat {\mathbf{E}}} + c_2 \right) } { \left( \mu_{\mathbf{E}}^{2}+\mu_{ \hat {\mathbf{E}} }^{2}+c_1\right) \left( \sigma_{\mathbf{E}}^{2}+\sigma_{ \hat {\mathbf{E}} }^{2}+c_2 \right) } .
\]
Here $\mu_{ \hat {\mathbf{E}} }$ and $\mu_{ \mathbf{E} }$ are the means of $\hat {\mathbf{E}}$ and $\mathbf{E}$, $\sigma^2_{\hat {\mathbf{E}}}$ and $\sigma^2_{\mathbf{E}}$ are the variances, and $\sigma_{ \mathbf{E} \hat {\mathbf{E}}}$ is the covariance between $\mathbf{E}$ and $\hat {\mathbf{E}}$. The constants $c_1$ and $c_2$ are used to stabilize the division with weak denominator. The SSIM value lies within the range of $[0, 1]$, where a higher value indicates that the recovered field exhibits structures more similar to the ground truth. 

\begin{figure}[!htbp] 
  \centerline{\includegraphics[width=1\columnwidth]{./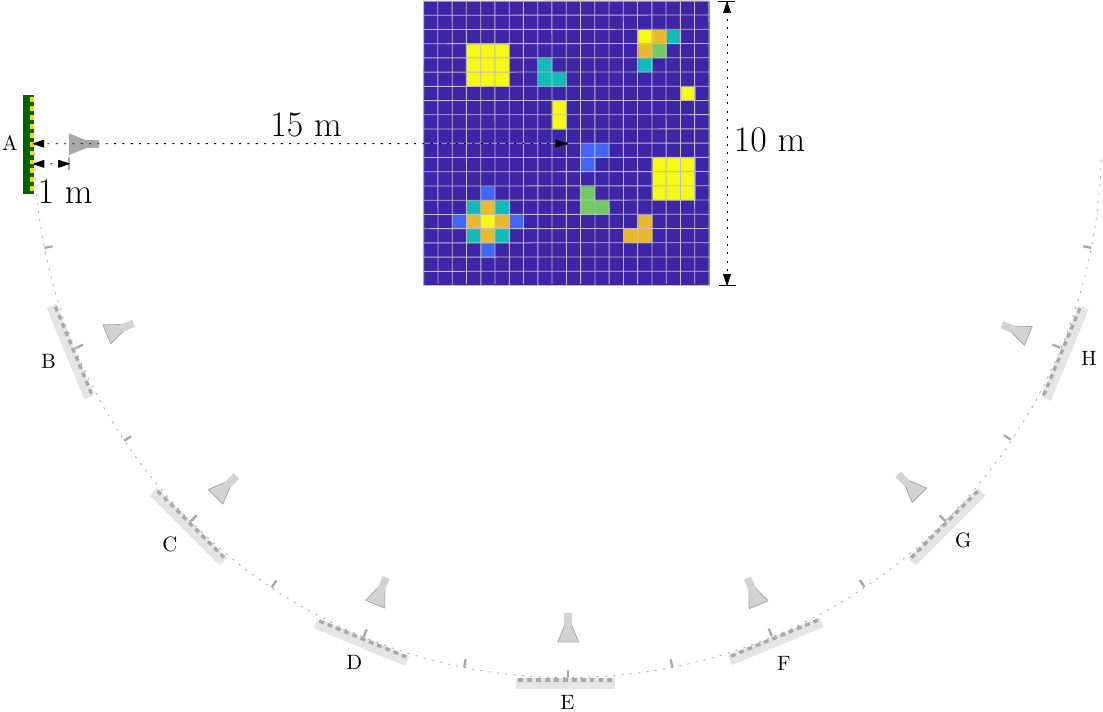}}
  \caption{Scenario configuration used in the simulation. A 10-meter by 10-meter square region is discretized into a 20 $\times$ 20 grid with 0.5-meter intervals. }
  \label{PK_scene}
\end{figure}

\subsection{Impact of the Number of Elements and Measurements} 
Theorem~\ref{T:1} asserts that the rank of the sensing matrix is upper-bounded by the total number of measurements and the total number of elements. To assess the influence of these factors, we first examine the performance of backward sensing in the noiseless case.

Let's imagine a scenario where the task is to sense a $10$ meters by $10$ meters square region with a radio source power distribution. The topology are demonstrated in Fig.~\ref{PK_scene}. Eight landmark positions are set 15 meters around the RoI, labeled from A to H. These landmarks are spaced at 22.5$^\circ$ intervals relative to the RoI, serving as candidate locations for RIS deployment. In this subsection, four uniform linear RISs are placed at four landmark positions, with the configuration \{A, C, E, G\}. Each RIS is equipped with a dedicated receiver. The measurement vectors corresponding to each RIS are stacked into a column vector, as depicted in \eqref{E:Mode_1}. The system operates at a frequency of $f = 20$ GHz. The RoI is discretized into 400 pixels. The parameters are detailed in Table \ref{tab:simulation_parameters}. 

\begin{table}[!t] 
  \centering
  \caption{Default setup parameters used in the simulation experiments.} 
  \label{tab:simulation_parameters} 
  \begin{tabular}{lll} 
  \toprule 
  \textbf{Parameter}                   & \textbf{Symbol}                                    &  \textbf{Value}     \\ \midrule 
  RoI                                  &   $x \times y $                                    &  10 m $\times$ 10 m   \\
  Spatial resolution                   &  $ \bigtriangleup  x \times \bigtriangleup y $ &   0.5 m $\times$ 0.5 m    \\
  Number of pixels                     &  $ M_x \times M_y$                                 &   20 $\times$ 20    \\ 
  Number of RISs                       &  $K$                                               &   4                 \\
  Operating frequency                  &  $f$                                               &   $20$ GHz          \\
  RIS element distance                 &  $d$                                               &   $0.015$ m          \\
  Distance from RoI to RIS             &  $L$                                               &   15 m               \\
  \bottomrule 
  \end{tabular} 
\end{table} 

\begin{figure}[t] 
  \centerline{\includegraphics[width=1\columnwidth]{./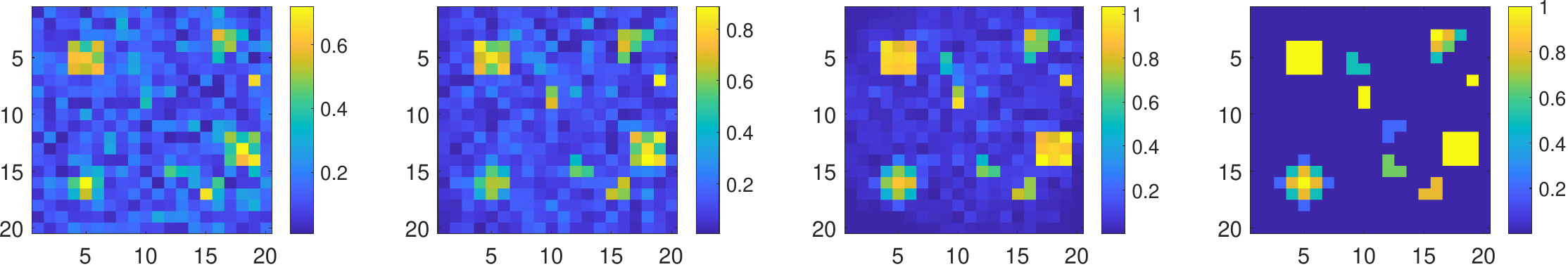}}
  \caption{Comparison of the recovered results as the number of measurements $T_k$ for each RIS varied at 50, 70, 90, and 110, with the number of elements fixed at $N_k = 110$ for each $k$.}
  \label{T_ls_matrix}
\end{figure}

\begin{figure}[t] 
  \centerline{\includegraphics[width=1\columnwidth]{./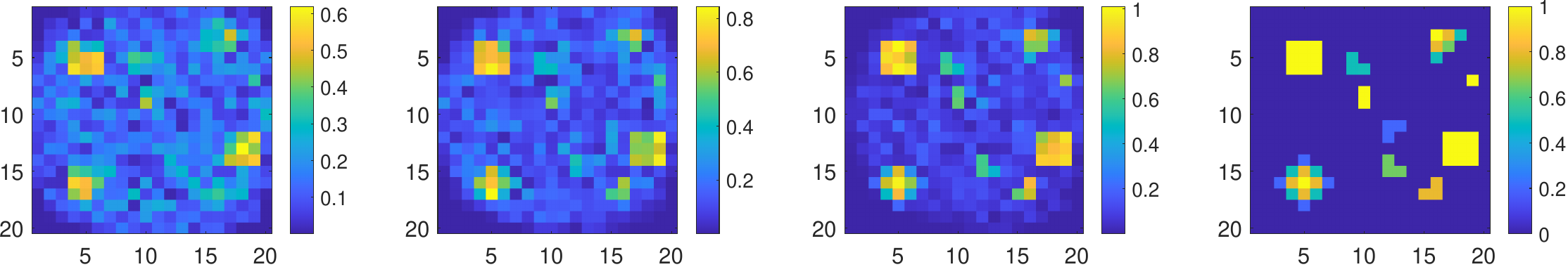}}
  \caption{Comparison of the recovered results as the number of elements $N_k$ for each RIS varied at 50, 70, 90, and 110, with the number of measurements fixed at $T_k = 110$ for each $k$.}
  \label{N_ls_matrix}
\end{figure}

\begin{figure}[!htbp]
  \centering
  \subfigure[]{
  \label{SSIMofT}
  \includegraphics[width=.48\columnwidth]{./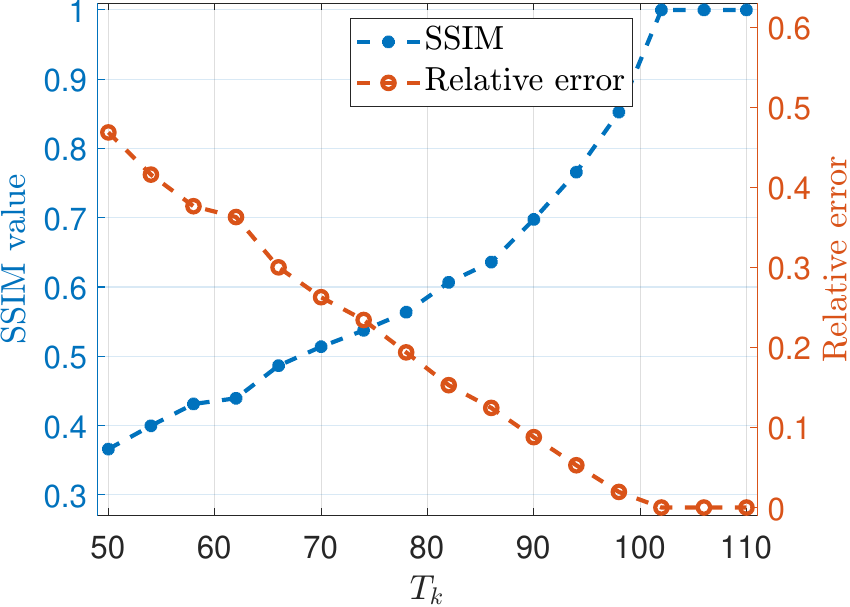}}
  \subfigure[]{
  \label{SSIMofN}
  \includegraphics[width=.48\columnwidth]{./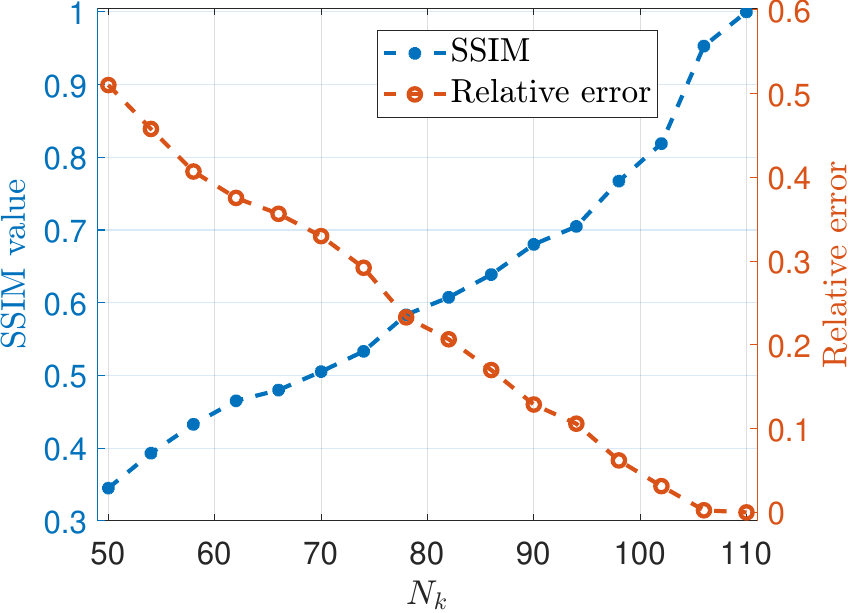}}
  \caption{(a) Relative error and SSIM as the number of measurements $T_k$ for each RIS varied from 50 to 110, with the number of elements fixed at $N_k = 110$ for each $k$. (b) Relative error and SSIM as the number of elements $N_k$ for each RIS varied from 50 to 110, with the number of measurements fixed at $T_k = 110$ for each $k$.}
  \label{SSIMofTN} 
\end{figure}

\begin{figure}[!htbp]
  \centering
  \subfigure[]{
  \label{SVD_1}
  \includegraphics[width=.47\columnwidth]{./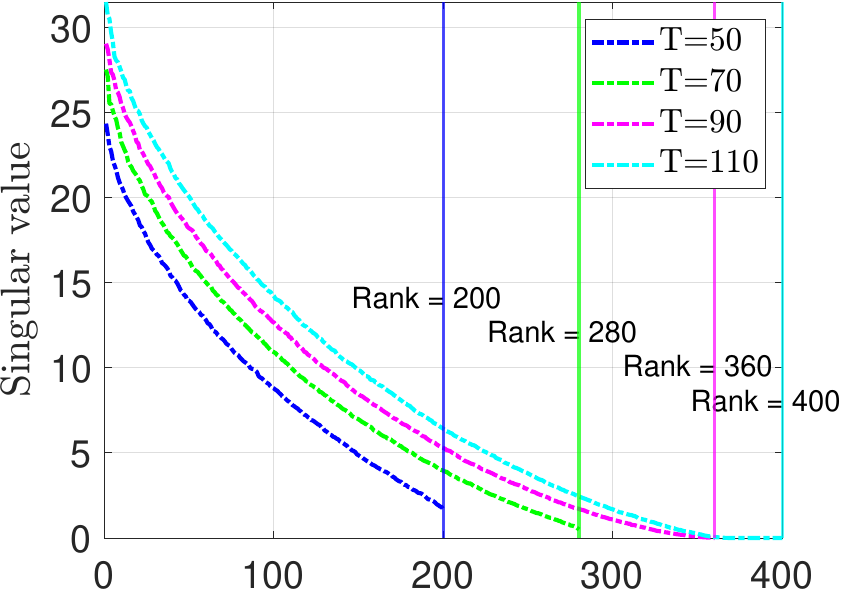}}
  \subfigure[]{
  \label{SVD_2}
  \includegraphics[width=.47\columnwidth]{./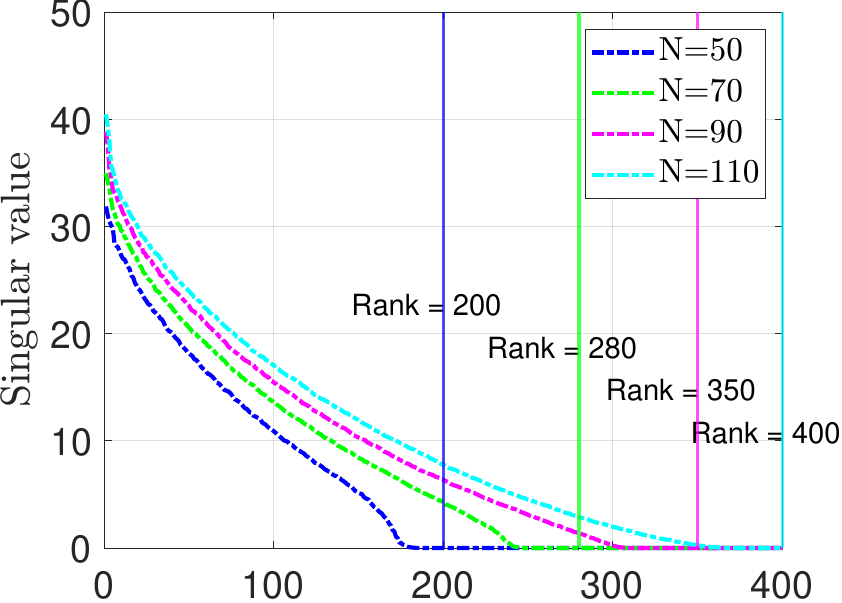}}
  \caption{Sorted singular values corresponding to all trials. (a) The number of measurements $T_k$ is varied at 50, 70, 90, and 110, with the number of elements fixed at $N_k = 110$. (b) $K$ The number of elements $N_k$ is varied at 50, 70, 90, and 110, with the number of measurements fixed at $T_k = 110$.}
  \label{SVD_Illustration} 
\end{figure}

For a comprehensive illustration, we systematically record the recovered results as the number of measurements $T_k$ for each RIS varies from 50 to 110, with the number of elements fixed at $N_k = 110$ for each $k$. The total number of elements sums up to $N = \sum_{k=1}^4 N_k = 440$. The reconstructed scenes are illustrated in Fig.~\ref{T_ls_matrix}. We plot the relative error and SSIM values in Fig.~\ref{SSIMofT}. Additionally, we increase the number of elements of each RIS from 50 to 110, while keeping the number of measurements fixed at $T_k = 110$ (yielding a total of $T = \sum_{k=1}^4 T_k = 440$ measurements). The results are presented in Figs.~\ref{N_ls_matrix} and \ref{SSIMofN}.

These plots demonstrate that increasing both the number of measurements and the number of elements enhances the performance of backward sensing. When $N$ or $T$ is not sufficiently large, a fundamental limitation arises: the sensing matrix encodes the high-dimensional scene into a low-dimensional observation. Clearly, this dimensionality reduction makes recovering the original scene using linear operators impractical. For instance, with $T_k = 50$, accurately reconstructing an original vector in $\mathbb{C}^{400}$ from an observation lying in $\mathbb{C}^{200}$ becomes challenging. { For better understanding, we illustrate the singular values corresponding to all the aforementioned trials in Fig.~\ref{SVD_Illustration}. The rank of the sensing matrix is also computed using singular value decomposition with a tolerance of \(10^{-12}\). As \(T_k\) increases, the rank progressively increases to 200, 280, 360, and 400, respectively. Similarly, for varying values of \(N_k\), the rank increases to 200, 280, 350, and 400, respectively.} These simulation results clearly validate the conclusion of Theorem~\ref{T:1}, confirming that rank deficiency occurs when the number of RIS elements or measurements falls below the RoI size.

It's worth emphasizing that even if the sensing matrix encodes the scene vector into a high-dimensional observation, achieving high-precision recovery might still be challenging. The presence of noise, particularly in cases of ill-conditioned sensing matrices, makes high-fidelity reconstruction from observations unfeasible.

\subsection{Impact of Geometry and Topology} 
In the context of backward sensing using multiple RISs, the conditioning of the sensing matrix depends not only on the number of elements and measurements but also on factors such as the system's geometry and topology. Optimizing the topology of RISs has the potential to enhance the conditioning of the sensing matrix, thereby improving the robustness of source distribution reconstruction.

To demonstrate this improvement, we conduct a series of experiments by strategically selecting eight landmark positions around the RoI, each of which serves as a potential location for an RIS. These RISs are positioned tangentially to a circle encompassing the RoI, which is located in the far field of the RISs. The location of these landmarks are shown in Fig.~\ref{PK_scene}. The specific positions are detailed in Table~\ref{tab:KP_parameters}. To ensure a fair comparison, we maintain the total number of elements at $N = \sum_{k} N_k = 540$ and the total number of measurements at $T = \sum_{k} T_k = 540$. To assess the stability of the sensing system, additive Gaussian noise is introduced in these experiments, with the SNR set to 30.

\begin{table}[!htbp]
  \centering
  \caption{Deployment strategies for combinations of different candidate positions.} 
  \label{tab:KP_parameters} 
  \begin{tabular}{lll} 
  \toprule 
  \textbf{Category}                        & \textbf{Configuration}              &  \textbf{Description}                                            \\ \midrule 
  Strategy \uppercase\expandafter{\romannumeral1}:     &  \{A\}                           &  \makecell[l]{-Use only a single landmark position }                 \\
  Strategy \uppercase\expandafter{\romannumeral2}:    &  \{A, E\}                        &  \makecell[l]{-Use two landmark positions perpendi-\\cular to the RoI}  \\
  Strategy \uppercase\expandafter{\romannumeral3}:  &  \{A, C, E, G\}                  &  \makecell[l]{-Use four landmark positions with adj-\\acent interval at an angle of 45$^\circ$} \\ 
  Strategy \uppercase\expandafter{\romannumeral4}:  & \makecell[l]{\{A, B, C, D,\\ E, F, G, H\} } &  \makecell[l]{-Use all landmark positions with high \\landmark density}      \\  
  \bottomrule 
  \end{tabular} 
\end{table}

The results of the backward sensing experiments are presented in Fig.~\ref{KP_ls_matrix}. In addition to the reconstructed source distribution, we plot the sorted singular values of the sensing matrices in Fig.~\ref{SVSofsensingmatrixKP}. For better illustration, the y-axis uses a logarithmic scale. It is evident that recovering the scene with acceptable fidelity is impossible when using only 1 or 2 RISs in this specific scenario. Notably, when using 2 RISs, the condition number of the sensing matrix is greater than $2.9 \times 10^{5}$. The backward sensing procedure is highly sensitive to perturbations in the observations. { Numerical results demonstrate that the multi-view observations enabled by multi-RIS configurations provide significant sensing gains~\cite{huang2024ris}.} 

{ For a more comprehensive analysis, we further explore how the resolution is affected by the distance between the RISs and the RoI. Using the same number of RISs, we vary the distance between the RISs and the RoI from 10 m to 15 m, 20 m, and 25 m. Fig.~\ref{L_ls_matrix} illustrates the reconstructed results for these varying distances. Closer observation positions yield clearer maps, primarily because shorter distances brings RISs with a larger field of view. Fig.~\ref{SVSofsensingmatrixL} also shows that as the distance increases, the conditioning of the sensing matrix worsens.} 

\begin{figure}[t] 
  \centerline{\includegraphics[width=1\columnwidth]{./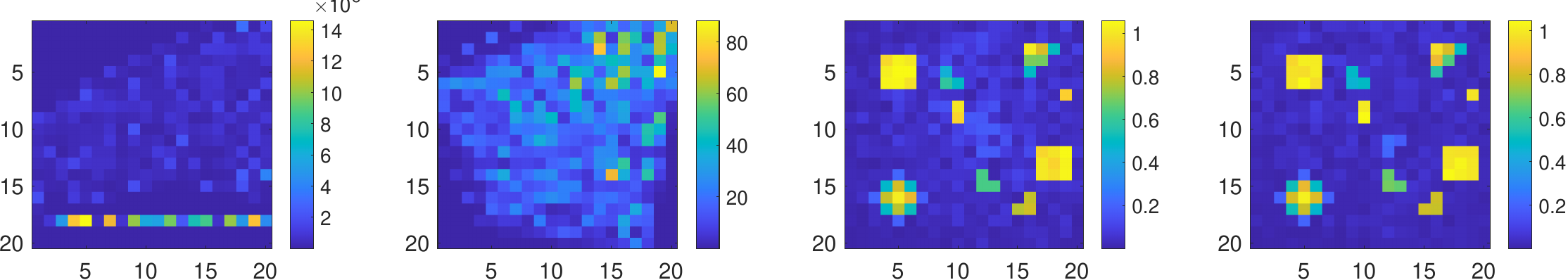}} 
  \caption{Comparison of the recovered results using various geometries and topologies of RISs. Achieving acceptable fidelity in scene recovery is impossible when employing only 1 or 2 RISs in this specific scenario.}
  \label{KP_ls_matrix} 
\end{figure}

\begin{figure}[t] 
  \centerline{\includegraphics[width=1\columnwidth]{./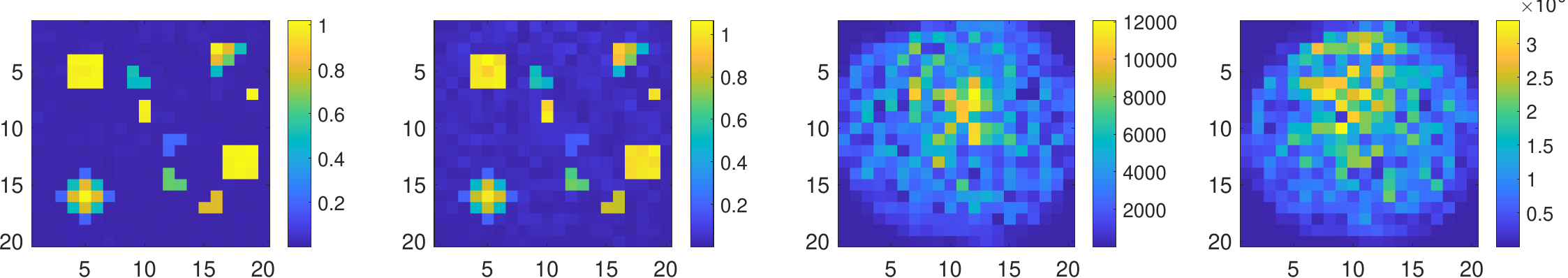}} 
  \caption{Comparison of the recovered results with various distance between the RISs and the RoI from 10 m to 15 m, 20 m, 25 m.}
  \label{L_ls_matrix} 
\end{figure}

\begin{figure}[t] 
  \centering
  \subfigure[]{
  \label{SVSofsensingmatrixKP}
  \includegraphics[width=.47\columnwidth]{./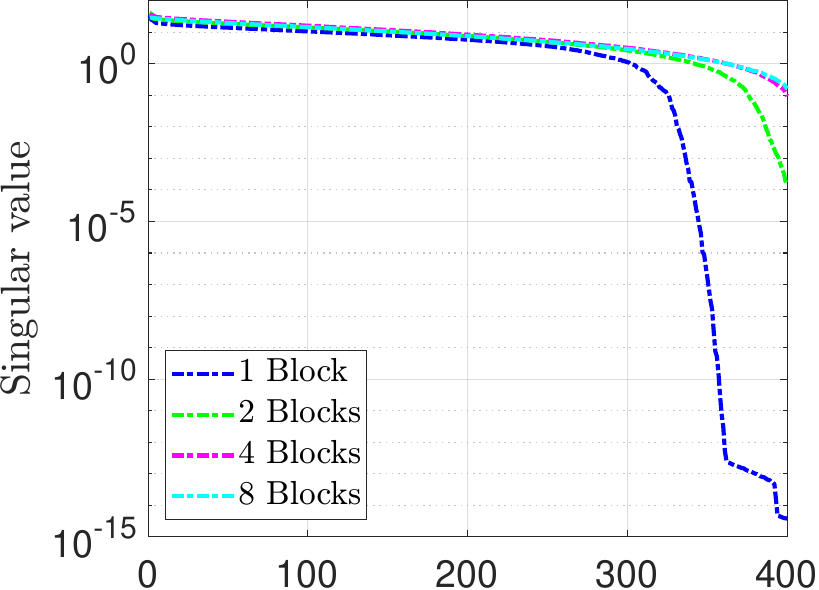}}
  \subfigure[]{
  \label{SVSofsensingmatrixL}
  \includegraphics[width=.465\columnwidth]{./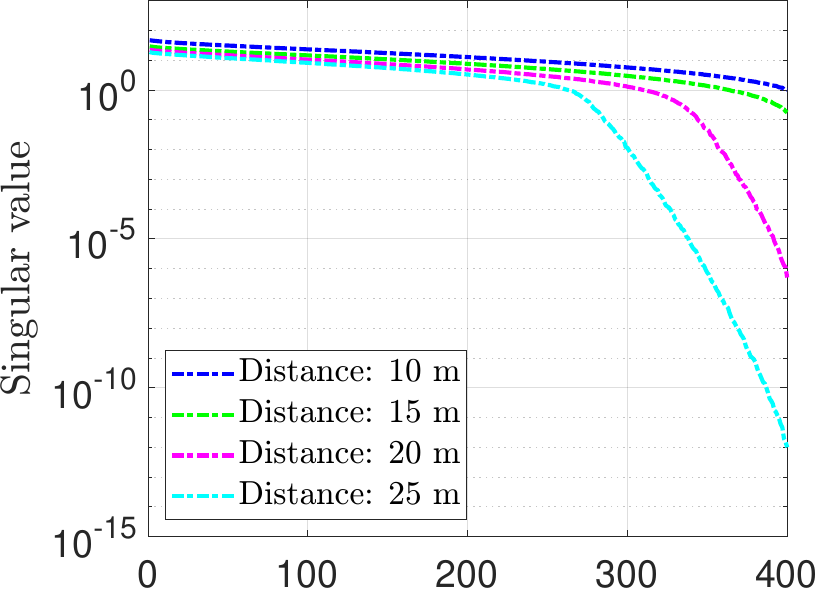}}
  \caption{(a) Sorted singular values for each deployment strategy, with the y-axis presented on a logarithmic scale for clarity. (b) Sorted singular values for varying RIS-to-RoI distance, with the y-axis presented on a logarithmic scale for clarity.} 
  \label{KP_cond_number_SVD}
\end{figure}

{ 
Overall, in multi-RIS deployments, increasing the number of observation views and reducing the distance between the SoI and the RISs to expand the field of view can significantly enhance sensing performance and improve mapping accuracy. }

\section{Proof-of-concept Prototype} \label{S:POC}
To validate the proposed sensing scheme, we implement a proof-of-concept prototype using a universal software radio peripheral (USRP) in a microwave anechoic chamber. Two RISs are strategically positioned at one end of the chamber to detect a power source located at the other end, as illustrated in Fig.~\ref{fig:experiment_scenario}. The schematic is depicted as Fig.~\ref{fig:two_RISs Schematic}. The RISs, operating at a frequency of 5.8 GHz, consist of 10 $\times$ 16 1-bit elements with an element spacing of 0.025 m, as shown in Fig.~\ref{fig:single board}. To evaluate the performance of uniform linear RISs within a planar RoI, each column is configured identically. The two RISs, separated by 2.81 m, are employed to detect the source positioned 6 m away. For each RIS, $T=500$ measurements are acquired to construct the sensing matrix $\mathbf{H}$.

\begin{figure}[tbp]
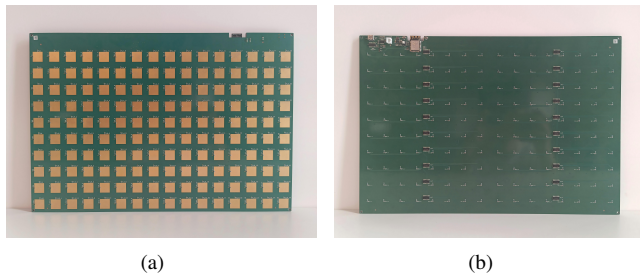
 
  \centering
  \subfigure[]{
  \label{fig:single board1}
  \includegraphics[width=.465\columnwidth]{./figure/3-1-RIS_board.pdf}}
  \subfigure[]{
  \label{fig:single board2}
  \includegraphics[width=.465\columnwidth]{./figure/3-1-RIS_board_back.pdf}}
  \caption{Photographs of the front (a) and back (b) of the fabricated RIS prototype with 10$\times$ 16 elements. The RIS, operating at a frequency of 5.8 GHz, consist of 10 $\times$ 16 1-bit elements with an element spacing of 0.025 m.}
  \label{fig:single board}
\end{figure}

\begin{figure}
  \centerline{\includegraphics[width=1\columnwidth]{./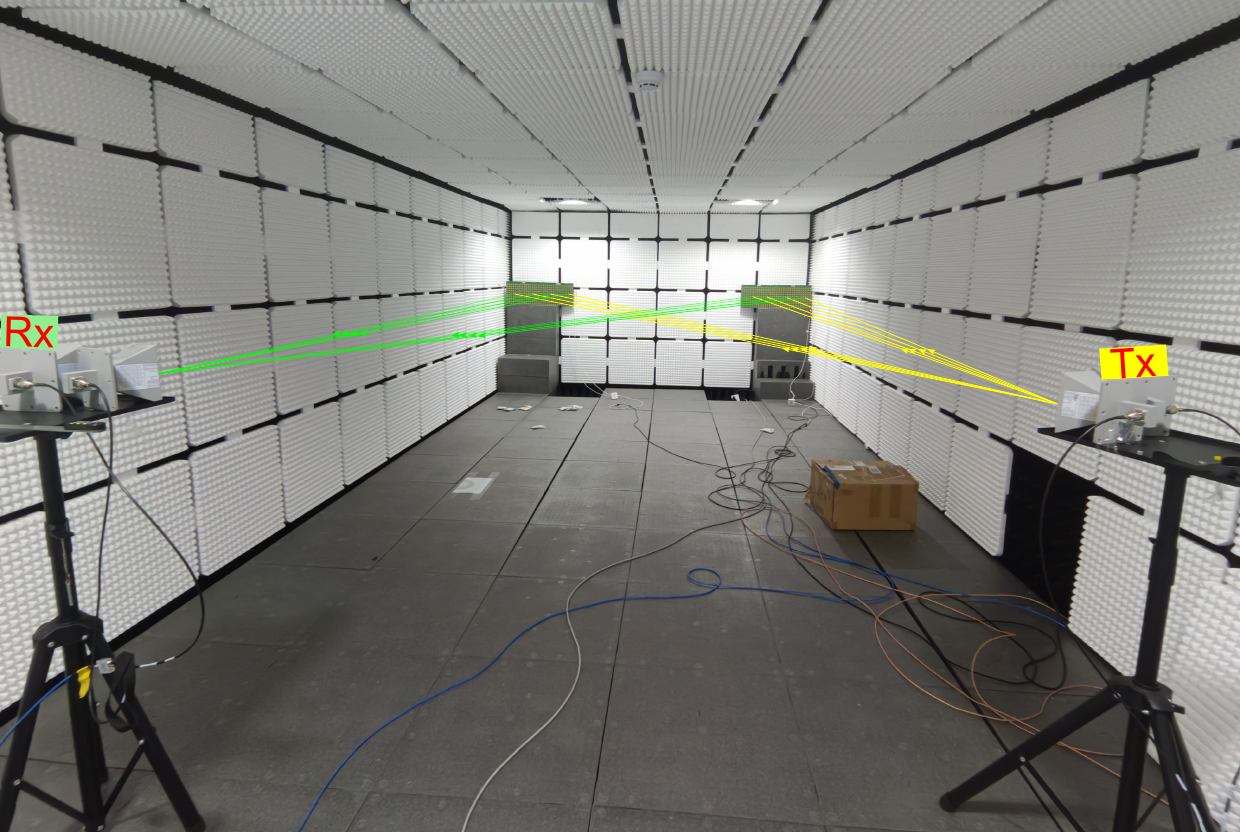}}
  \caption{Experimental setup of two RISs sensing system in a microwave anechoic chamber.}
  \label{fig:experiment_scenario}
\end{figure}

\begin{figure}
  \centerline{\includegraphics[width=0.9 \columnwidth]{./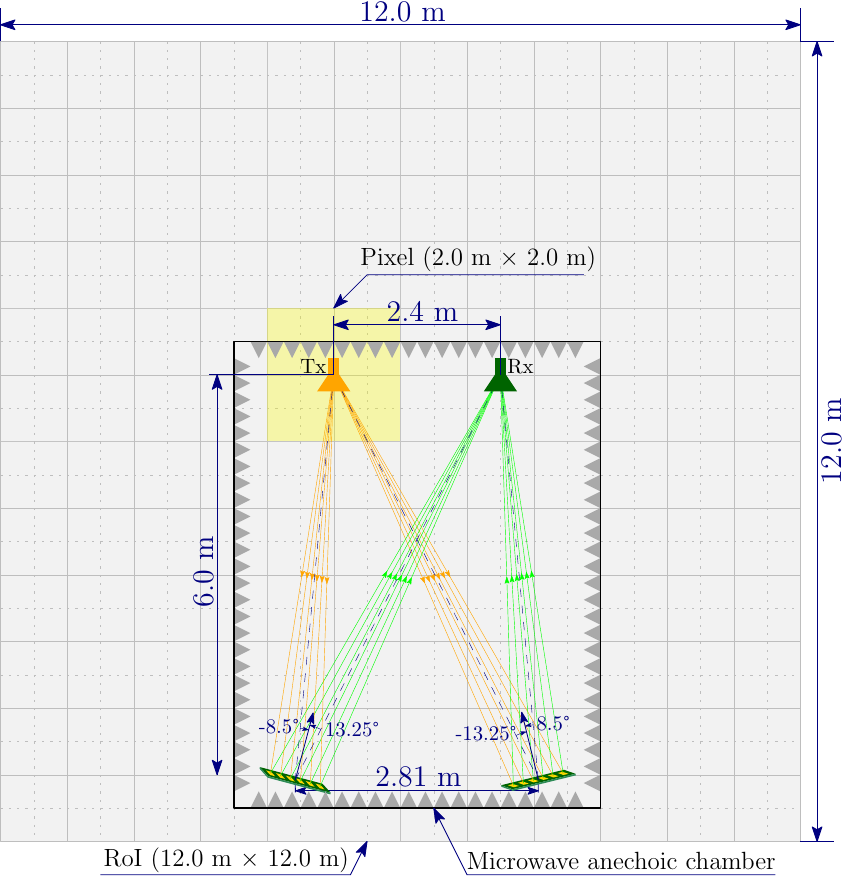}}
  \caption{The geometry and topology of the experiment conducted inside the anechoic chamber. The RoI is a square region of 12 m $\times$ 12 m, discretized into a 12 $\times$ 12 grid, with each pixel covering 1 m $\times$ 1 m.}
  \label{fig:two_RISs Schematic}
\end{figure}

\begin{figure}[!htbp] 
  \centerline{\includegraphics[width=0.9 \columnwidth]{./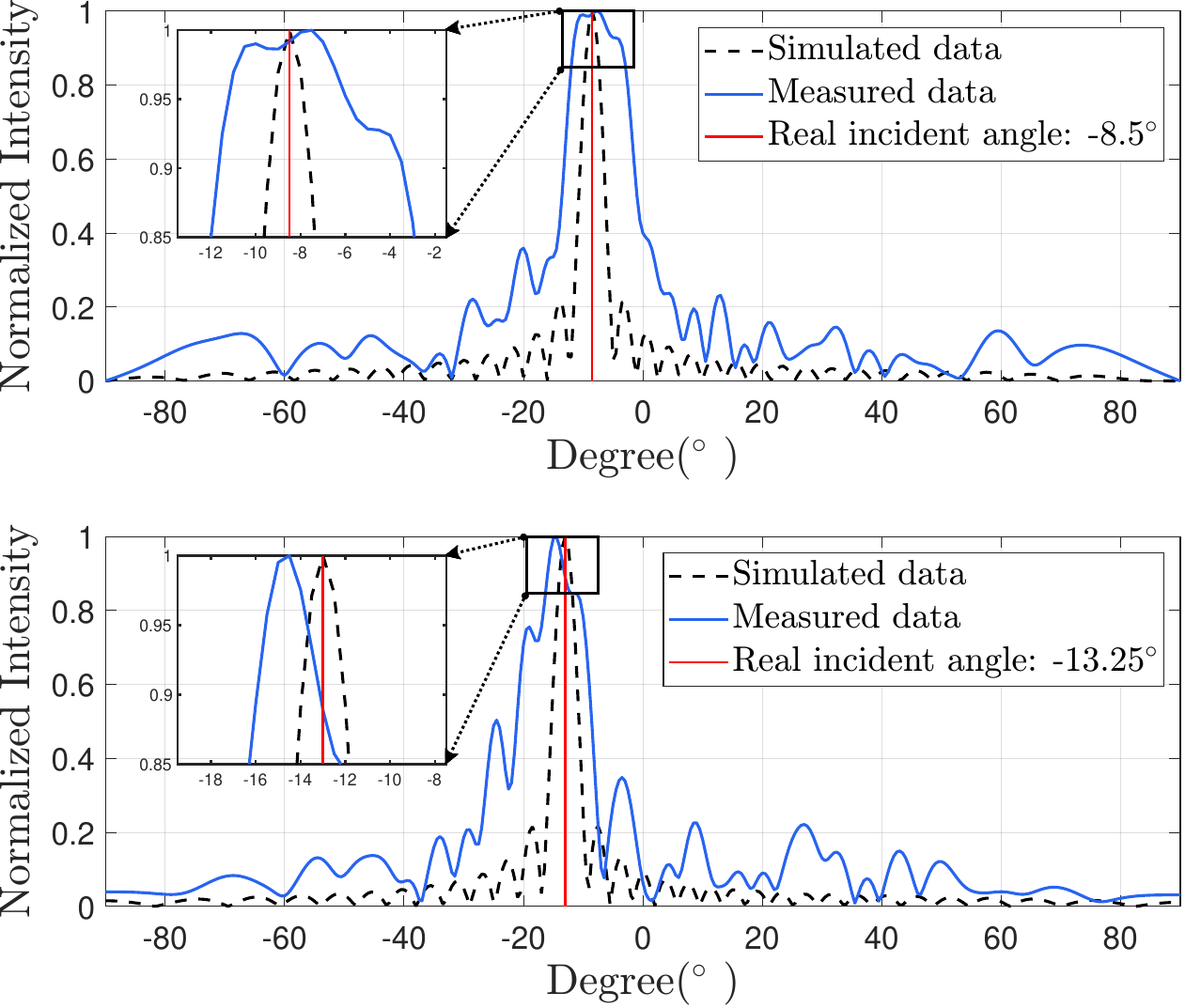}}
  \caption{Results of DoA derived from both simulated and measured data by the magnitude-only reconstruction algorithm. The first row displays the results of DoA from the left RIS. The second row displays the results of DoA from the right RIS. Results demonstrate high accuracy with errors within $2^\circ$.}
  \label{1simulation} 
\end{figure}

\begin{figure}[htbp] 
  \centerline{\includegraphics[width=1\columnwidth]{./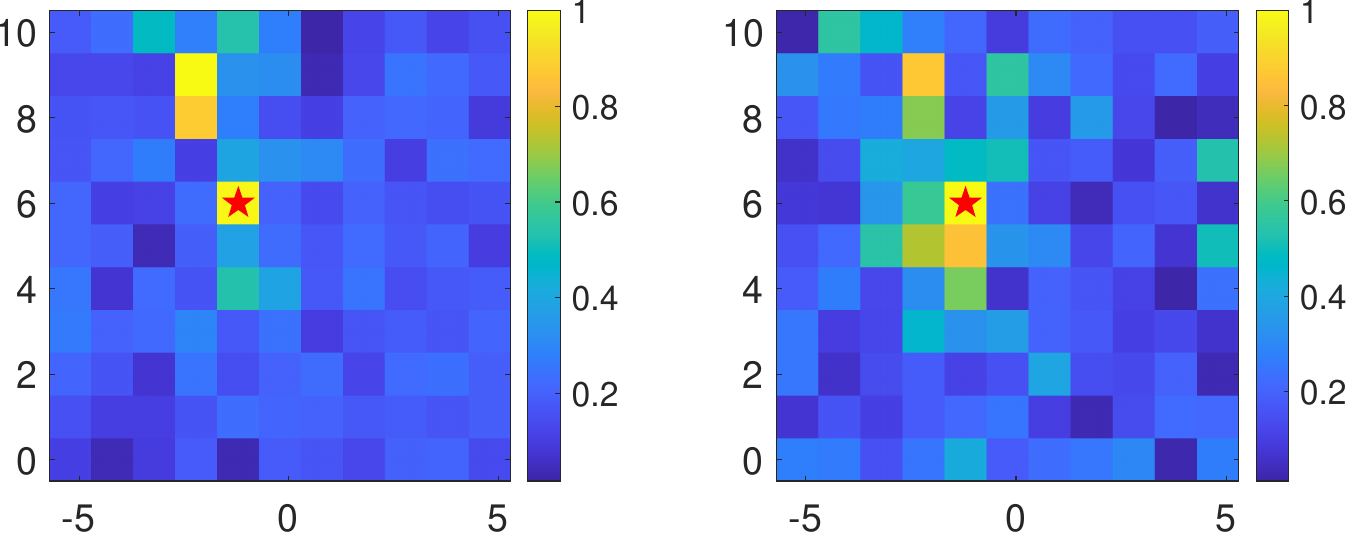}}
  \caption{Recovered results using the magnitude-only reconstruction algorithm in the chamber. The left figure displays the reconstructed scene from simulated data. The right figure shows the reconstructed scene from measured data.} 
  \label{experiment_Reconstructed_real1}
\end{figure}

It's noted that the validation using USRP has a limitation: while the strength of the carrier wave is easy to measure, accurately capturing the phase is more challenging. Therefore, the backward sensing process must rely on magnitude-only reconstruction algorithms, which are also applicable to other wireless devices such as Wi-Fi or spectrum analyzers. Mathematically, the magnitude-only backward sensing problem is formulated as follows: given $\mathbf{H}$ and $\lvert \mathbf{S} \rvert$, find $\mathbf{E}$ such that $\lvert \mathbf{S} \rvert \approx \lvert \mathbf{H} \mathbf{E} \rvert $. 
In contrast to phased measurements described by \eqref{E:LinearMeasurement}, the missing phase makes the problem more complex. The phaseless reconstruction problem is highly nonlinear and significantly more ill-posed than the phased reconstruction problem~\cite{ammari2016phased}. Various approaches have been developed to address the phaseless reconstruction problem, including gradient descent methods, iterative projection methods, and convex optimization-based methods~\cite{fannjiang2020numerics}. We focus on validating our backward sensing approach rather than delving into the specifics of these algorithms. For our purposes, we employ the reweighted Wirtinger flow (RWF) algorithm as the solver~\cite{yuan2017phase}.

We initiate the validation of the RWF algorithm by estimating the DoA using data independently obtained from each of the two RISs. Fig.~\ref{1simulation} shows the DoA estimations from both RISs. Specifically, a single source is localized at $-8.5^\circ$ for the left RIS and $-13.25^\circ$ for the right RIS. In numerical experiments, a distinct peak is observed with DoA estimations of $-8.5^\circ$ and $-13^\circ$ for the left RIS and the right RIS, respectively. In the results estimated using measured data, the DoA estimations are $-7.5^\circ$ for the left RIS and $-14.5^\circ$ for the right RIS, with errors within two degrees. This level of accuracy is comparable to prior reports on RIS-aided or low-complexity DoA estimation systems~\cite{lin2021single,11072249}. These figures illustrate the remarkably accurate estimation of incident angles, thereby demonstrating the accuracy of our modeling and the effectiveness of the RWF algorithm within this experimental setup.

To further validate the collaborative sensing capabilities of the two RISs, we define the RoI as a square with a size of 12 m $\times$ 12 m, discretized into a grid of 12 $\times$ 12, resulting in each pixel being 1 m $\times$ 1 m. The final results, illustrated in Fig.~\ref{experiment_Reconstructed_real1}, provide a comprehensive overview. The true location of the source is indicated by red pentagrams. The plots clearly demonstrate that the power source can be precisely located using the two RISs. These findings validate the feasibility of the proposed backward sensing approach with multiple RISs.

\begin{figure*}
  \begin{equation} \label{covariance} 
    \begin{aligned}
      {\mathbf{V}}^*(\Theta) \mathbf{V}(\Theta) 
      &= \begin{bmatrix}
        1 & e^{ -j 2 \pi d \sin \theta^{\text{i}} / \lambda }  & \cdots & e^{ -j 2 \pi (N-1) d \sin \theta^{\text{i}} / \lambda }\\
        1 & e^{ -j 2 \pi d \sin ( \theta^{\text{i}}+\Delta ) / \lambda }  & \cdots & e^{ -j 2 \pi (N-1) d \sin ( \theta^{\text{i}}+\Delta ) / \lambda } \\
      \end{bmatrix} 
      \begin{bmatrix}
          1      & 1 \\
          e^{ j 2 \pi d \sin \theta^{\text{i}} / \lambda }    & e^{ j 2 \pi d \sin ( \theta^{\text{i}}+\Delta ) / \lambda }\\
          \vdots & \vdots\\
          e^{ j 2 \pi (N-1) d \sin \theta^{\text{i}}/ \lambda }  & e^{ j 2 \pi (N-1) d \sin ( \theta^{\text{i}}+\Delta ) / \lambda }
        \end{bmatrix}\\
    &= \begin{bmatrix}
      N       & 1  + \cdots + e^{ j 2 \pi (N-1) d  (\sin ( \theta^{\text{i}}+\Delta ) -\sin \theta^{\text{i}})/ \lambda }\\
      1  + \cdots + e^{ j 2 \pi (N-1) d (\sin \theta^{\text{i}}- \sin ( \theta^{\text{i}}+\Delta )) / \lambda } & N
    \end{bmatrix} .
    \end{aligned}
  \end{equation}
  
  \begin{equation} \label{lambda1}
    \begin{aligned}
      \lambda_1,\lambda_2  
      & = N \pm \sqrt{(1 + \dots + e^{ j 2 \pi (N-1) d (\sin \theta^{\text{i}}- \sin ( \theta^{\text{i}}+\Delta )) / \lambda } ) (1+\dots+e^{ j 2 \pi (N-1) d  (\sin ( \theta^{\text{i}}+\Delta ) -\sin \theta^{\text{i}})/ \lambda })}\\
      & = N \pm \left|\frac{1-e^{j 2 \pi N d (\sin \theta^{\text{i}}- \sin ( \theta^{\text{i}}+\Delta ))\lambda}}{1-e^{j 2 \pi d (\sin \theta^{\text{i}}- \sin ( \theta^{\text{i}}+\Delta )) / \lambda}} \right| 
      = N \pm \left| \frac{ \sin ( \pi N d (\sin \theta^{\text{i}} - \sin ( \theta^{\text{i}}+\Delta ) ) / \lambda ) }{ \sin ( \pi d ( \sin \theta^{\text{i}} - \sin ( \theta^{\text{i}}+\Delta ) ) / \lambda ) } \right| .\\
    \end{aligned}
  \end{equation}
  \medskip
  \hrule
\end{figure*}

\section{Conclusion} \label{S:Conclusion}

This paper investigates the feasibility of RIS-centric backward sensing. We demonstrate that a single RIS enables DoA estimation, while the spatial diversity provided by multiple RISs facilitates multi-source localization. A theoretical framework based on singularity analysis is developed to quantify key performance indicators, and numerical simulations validate the analysis across diverse configurations. Finally, a USRP-based proof-of-concept prototype provides experimental validation of RIS-centric sensing. The current prototype explores single-target localization in a controlled line-of-sight environment. Future work will extend the system to multi-target localization under non-line-of-sight and dynamic conditions~\cite{li2023riscan}, and investigate optimization-based or AI-driven strategies for adaptive sensing and robust parameter estimation in complex indoor environments.

\section{Acknowledgment} \label{S:Acknowledgment}
The authors gratefully acknowledge the editor and anonymous reviewers for their constructive suggestions, which significantly enhanced the rigor and clarity of this work.


\appendices

\section{Proof of Lemma~\ref{L:4}}\label{S:Appendix_A3}
\begin{proof} 
For the matrix $\mathbf{V} ( \mathbf{\Theta}^\text{i})$ defined by \eqref{E:VandermodeMatrix}, we first calculate $\mathbf{V}^* ( \mathbf{\Theta}^\text{i}) \mathbf{V} ( \mathbf{\Theta}^\text{i})$, as shown in 
\eqref{covariance}. It is straightforward to calculate the eigenvalues of $\mathbf{V}^* ( \mathbf{\Theta}^\text{i}) \mathbf{V} ( \mathbf{\Theta}^\text{i})$, which are given in \eqref{lambda1}. Finally, we establish the singular values of $\mathbf{V} ( \mathbf{\Theta}^\text{i})$. 
\end{proof}

\section{Proof of Lemma~\ref{L:5}}\label{S:Appendix_A4}
\begin{proof} 
The main idea is to find the series expansion
\begin{equation} \label{PL5:1}
  \frac{\sin N x}{\sin x} = \sum_{k=0}^{\infty} a_{k} x^{2 k} .
\end{equation}
It is easy to verify that
\begin{equation} \label{D:4}
  \begin{aligned}
    \frac{\sin N x}{\sin x}
    &= \frac{e^{i N x}-e^{-i N x}}{e^{i x}-e^{-i x}}
    =\sum_{l=0}^{N-1} e^{i(N-1-2 l) x}\\
    &=\left\{\begin{array}{ll}
      1+2 \sum_{l=1}^{\frac{N - 1}{2}} \cos 2 l x  & N \text { odd}, \\
      2 \sum_{l=1}^{\frac{N}{2}} \cos (2 l -1) x  & N \text { even}.
    \end{array} \right.
  \end{aligned}
\end{equation}
Using the Taylor expansion $\cos x = \sum_{k=0}^{\infty} \frac{(-1)^k}{(2k)!} x^{2 k}$, we find 
\begin{equation}\label{E:TaylorCos1}
  \cos 2 l x = \sum_{k=0}^{\infty} \frac{(-1)^k}{(2k)!} (2 l)^k x^{2 k} ,
\end{equation}
and
\begin{equation}\label{E:TaylorCos2}
  \cos (2 l -1) x = \sum_{k=0}^{\infty} \frac{(-1)^k}{(2k)!} (2 l - 1)^k x^{2 k} .
\end{equation}
Substituting \eqref{E:TaylorCos1} and \eqref{E:TaylorCos2} into \eqref{D:4} yields
\begin{equation}
    a_{k} = \frac{(-1)^{k}}{(2 k) !} \sum_{l=0}^{N-1}(N-1-2 l)^{2 k} .
\end{equation}
It follows that $a_0 = N$ and $a_1 = -\frac{N (N^2-1)}{6}$. Finally, for sufficiently small $x$, we have 
\begin{equation}
  \frac{ \sin (N x) }{\sin x} \approx N - \frac{ N ( N^2-1 ) x^2 }{ 6 }. 
\end{equation}
\end{proof}

\bibliographystyle{IEEEtran}
\bibliography{Reference}

\begin{thebibliography}{10}
\providecommand{\url}[1]{#1}
\csname url@samestyle\endcsname
\providecommand{\newblock}{\relax}
\providecommand{\bibinfo}[2]{#2}
\providecommand{\BIBentrySTDinterwordspacing}{\spaceskip=0pt\relax}
\providecommand{\BIBentryALTinterwordstretchfactor}{4}
\providecommand{\BIBentryALTinterwordspacing}{\spaceskip=\fontdimen2\font plus
\BIBentryALTinterwordstretchfactor\fontdimen3\font minus \fontdimen4\font\relax}
\providecommand{\BIBforeignlanguage}[2]{{%
\expandafter\ifx\csname l@#1\endcsname\relax
\typeout{** WARNING: IEEEtran.bst: No hyphenation pattern has been}%
\typeout{** loaded for the language `#1'. Using the pattern for}%
\typeout{** the default language instead.}%
\else
\language=\csname l@#1\endcsname
\fi
#2}}
\providecommand{\BIBdecl}{\relax}
\BIBdecl

\bibitem{du2024nested}
J.~Du, Y.~Cheng, L.~Jin, S.~Li, and F.~Gao, ``Nested tensor-based integrated sensing and communication in {RIS}-assisted {THz} {MIMO} systems,'' \emph{IEEE Trans. Signal Process.}, vol.~72, pp. 1141--1157, Jan. 2024.

\bibitem{tang2020mimo}
W.~Tang~et al., ``{MIMO} transmission through reconfigurable intelligent surface: {System} design, analysis, and implementation,'' \emph{IEEE J. Sel. Areas Commun.}, vol.~38, no.~11, pp. 2683--2699, Nov. 2020.

\bibitem{wu2019intelligent}
Q.~Wu and R.~Zhang, ``Intelligent reflecting surface enhanced wireless network via joint active and passive beamforming,'' \emph{IEEE Trans. Wireless Commun.}, vol.~18, no.~11, pp. 5394--5409, Nov. 2019.

\bibitem{basar2019wireless}
E.~Basar, M.~Di~Renzo, J.~De~Rosny, M.~Debbah, M.-S. Alouini, and R.~Zhang, ``Wireless communications through reconfigurable intelligent surfaces,'' \emph{IEEE access}, vol.~7, pp. 116\,753--116\,773, Aug. 2019.

\bibitem{song2023intelligent}
X.~Song, J.~Xu, F.~Liu, T.~X. Han, and Y.~C. Eldar, ``Intelligent reflecting surface enabled sensing: {Cram{\'e}r-Rao} bound optimization,'' \emph{IEEE Trans. Signal Process.}, vol.~71, pp. 2011--2026, May 2023.

\bibitem{WOS}
X.~Liu, Y.~Liu, Y.~Chen, and H.~V. Poor, ``{RIS} enhanced massive non-orthogonal multiple access networks: {Deployment} and passive beamforming design,'' \emph{IEEE J. Sel. Areas Commun.}, vol.~39, no.~4, pp. 1057--1071, Apr. 2021.

\bibitem{xiong2024optimal}
R.~Xiong, K.~Yin, T.~Mi, J.~Lu, K.~Wan, and R.~C. Qiu, ``Fair beam allocations through reconfigurable intelligent surfaces,'' \emph{IEEE J. Sel. Areas Commun.}, pp. 1--1, Jul. 2024.

\bibitem{cheng2022reconfigurable}
Q.~Cheng~et al., ``Reconfigurable intelligent surfaces: {Simplified}-architecture transmitters—from theory to implementations,'' \emph{Proc. IEEE}, vol. 110, no.~9, pp. 1266--1289, Sept. 2022.

\bibitem{zhang2021metalocalization}
H.~Zhang, H.~Zhang, B.~Di, K.~Bian, Z.~Han, and L.~Song, ``Metalocalization: {Reconfigurable} intelligent surface aided multi-user wireless indoor localization,'' \emph{IEEE Trans. Wireless Commun.}, vol.~20, no.~12, pp. 7743--7757, Dec. 2021.

\bibitem{he2022high}
Y.~He, D.~Zhang, and Y.~Chen, ``High-resolution {WiFi} imaging with reconfigurable intelligent surfaces,'' \emph{IEEE Internet Things J.}, vol.~10, no.~2, pp. 1775--1786, Jan. 2022.

\bibitem{romero2022radio}
D.~Romero and S.-J. Kim, ``Radio map estimation: {A} data-driven approach to spectrum cartography,'' \emph{IEEE Signal Process. Mag.}, vol.~39, no.~6, pp. 53--72, Oct. 2022.

\bibitem{huang2014cooperative}
D.-H. Huang, S.-H. Wu, W.-R. Wu, and P.-H. Wang, ``Cooperative radio source positioning and power map reconstruction: {A} sparse {Bayesian} learning approach,'' \emph{IEEE Trans. Veh. Technol.}, vol.~64, no.~6, pp. 2318--2332, Aug. 2014.

\bibitem{bazerque2009distributed}
J.~A. Bazerque and G.~B. Giannakis, ``Distributed spectrum sensing for cognitive radio networks by exploiting sparsity,'' \emph{IEEE Trans. Signal Process.}, vol.~58, no.~3, pp. 1847--1862, Dec. 2009.

\bibitem{jiang2019exploiting}
Z.~Jiang, S.~Chen, A.~F. Molisch, R.~Vannithamby, S.~Zhou, and Z.~Niu, ``Exploiting wireless channel state information structures beyond linear correlations: {A} deep learning approach,'' \emph{IEEE Commun. Mag.}, vol.~57, no.~3, pp. 28--34, 2019.

\bibitem{ferrand2023wireless}
P.~Ferrand, M.~Guillaud, C.~Studer, and O.~Tirkkonen, ``Wireless channel charting: {Theory}, practice, and applications,'' \emph{IEEE Commun. Mag.}, vol.~61, no.~6, pp. 124--130, 2023.

\bibitem{chen2022sensing}
X.~Chen, Z.~Feng, J.~A. Zhang, Z.~Wei, X.~Yuan, and P.~Zhang, ``Sensing-aided uplink channel estimation for joint communication and sensing,'' \emph{IEEE Wireless Commun. Lett.}, vol.~12, no.~3, pp. 441--445, 2022.

\bibitem{armenta2024wireless}
J.~A. Armenta-Garcia, F.~F. Gonzalez-Navarro, and J.~Caro-Gutierrez, ``Wireless sensing applications with {Wi-Fi} channel state information, preprocessing techniques, and detection algorithms: {A} survey,'' \emph{Computer Commun.}, vol. 224, pp. 254--274, 2024.

\bibitem{11072249}
F.~Wang, Z.~Li, R.~Xiong, T.~Mi, and R.~C. Qiu, ``{WiCAL}: {Accurate} {Wi-Fi}-based {3D} localization enabled by collaborative antenna arrays,'' \emph{IEEE J. Sel. Areas Commun.}, Early Access 2025.

\bibitem{wei2022channel}
Y.~Wei, M.-M. Zhao, A.~Liu, and M.-J. Zhao, ``Channel tracking and prediction for {IRS-aided} wireless communications,'' \emph{IEEE Trans. Wireless Commun.}, vol.~22, no.~1, pp. 563--579, 2022.

\bibitem{li2024wifi}
W.~Li~et al., ``{WiFi-CSI} difference paradigm: {Achieving} efficient doppler speed estimation for passive tracking,'' \emph{Proc. ACM Interact. Mob. Wearable Ubiquitous Technol.}, vol.~8, no.~2, pp. 1--29, 2024.

\bibitem{liu2024deep}
Z.~Liu, L.~Wang, L.~Xu, and Z.~Ding, ``Deep learning for efficient {CSI} feedback in massive {MIMO}: {Adapting} to new environments and small datasets,'' \emph{IEEE Trans. Wireless Commun.}, vol.~23, no.~9, pp. 12\,297--12\,312, 2024.

\bibitem{huang2025sensemamba}
Y.~Huang, J.~Liu, X.~Shi, S.~Zhao, T.~Mi, and R.~C. Qiu, ``Sensemamba: {A} general lightweight state space model for wireless human sensing,'' \emph{IEEE Sensors J.}, vol.~25, no.~8, pp. 13\,859 -- 13\,870, Mar. 2025.

\bibitem{zhao2023nerf2}
X.~Zhao, Z.~An, Q.~Pan, and L.~Yang, ``Nerf2: {Neural} radio-frequency radiance fields,'' in \emph{Proc. 29th Annu. Int. Conf. Mobile Comput. Netw.}, 2023, pp. 1--15.

\bibitem{khatab2021fingerprint}
Z.~E. Khatab, A.~H. Gazestani, S.~A. Ghorashi, and M.~Ghavami, ``A fingerprint technique for indoor localization using autoencoder based semi-supervised deep extreme learning machine,'' \emph{Signal Process.}, vol. 181, p. 107915, 2021.

\bibitem{li2023riscan}
C.~Li~et al., ``{RIScan}: {RIS}-aided multi-user indoor localization using {COTS} {Wi-Fi},'' in \emph{Proc. 21st ACM Conf. Embedded Networked Sensor Syst.}, 2023, pp. 445--458.

\bibitem{liu2025tris}
J.~Liu~et al., ``{TRIS-HAR}: {Transmissive} reconfigurable intelligent surfaces-assisted human activity recognition using state space models,'' \emph{IEEE Internet Things J.}, vol.~12, no.~16, pp. 33\,421 -- 33\,436, May 2025.

\bibitem{liu2024risar}
------, ``{RISAR}: {Reconfigurable} intelligent surfaces-assisted human activity recognition with commercial {Wi-Fi} devices,'' \emph{IEEE Internet Things J.}, vol.~11, no.~24, pp. 40\,478--40\,491, Sep. 2024.

\bibitem{li2024star}
M.~Li, S.~Zhang, Y.~Ge, Z.~Li, F.~Gao, and P.~Fan, ``{STAR-RIS} aided integrated sensing and communication over high mobility scenario,'' \emph{IEEE Trans. Commun.}, vol.~72, no.~8, pp. 4788--4802, Mar. 2024.

\bibitem{yuan2021reconfigurable}
X.~Yuan, Y.-J.~A. Zhang, Y.~Shi, W.~Yan, and H.~Liu, ``Reconfigurable-intelligent-surface empowered wireless communications: {Challenges} and opportunities,'' \emph{IEEE Wireless Commun.}, vol.~28, no.~2, pp. 136--143, 2021.

\bibitem{gomes2023channel}
P.~R. Gomes, G.~T. de~Ara{\'u}jo, B.~Sokal, A.~L. de~Almeida, B.~Makki, and G.~Fodor, ``Channel estimation in {RIS}-assisted {MIMO} systems operating under imperfections,'' \emph{IEEE Trans. Veh. Technol.}, vol.~72, no.~11, pp. 14\,200--14\,213, 2023.

\bibitem{sun2024computational}
H.~Sun, F.~Gao, S.~Zhang, S.~Jin, and T.~J. Cui, ``Computational imaging with holographic {RIS}: {Sensing} principle and pathloss analysis,'' \emph{IEEE J. Sel. Areas Commun.}, vol.~42, no.~6, pp. 1703--1716, May 2024.

\bibitem{rinchi2022compressive}
O.~Rinchi, A.~Elzanaty, and M.-S. Alouini, ``Compressive near-field localization for multipath {RIS}-aided environments,'' \emph{IEEE Commun. Lett.}, vol.~26, no.~6, pp. 1268--1272, Jun. 2022.

\bibitem{lin2021single}
M.~Lin~et al., ``Single sensor to estimate {DOA} with programmable metasurface,'' \emph{IEEE Internet Things J.}, vol.~8, no.~12, pp. 10\,187--10\,197, Jun. 2021.

\bibitem{keykhosravi2023leveraging}
K.~Keykhosravi~et al., ``Leveraging {RIS}-enabled smart signal propagation for solving infeasible localization problems: {Scenarios}, key research directions, and open challenges,'' \emph{IEEE Veh. Technol. Mag.}, vol.~18, no.~2, pp. 20--28, Jun. 2023.

\bibitem{buzzi2021radar}
S.~Buzzi, E.~Grossi, M.~Lops, and L.~Venturino, ``Radar target detection aided by reconfigurable intelligent surfaces,'' \emph{IEEE Signal Process. Lett.}, vol.~28, pp. 1315--1319, Jun. 2021.

\bibitem{wymeersch2020radio}
H.~Wymeersch, J.~He, B.~Denis, A.~Clemente, and M.~Juntti, ``Radio localization and mapping with reconfigurable intelligent surfaces: {Challenges}, opportunities, and research directions,'' \emph{IEEE Veh. Technol. Mag.}, vol.~15, no.~4, pp. 52--61, Oct. 2020.

\bibitem{shekhawat2024millimeter}
A.~S. Shekhawat, B.~G. Kashyap, R.~W.~R. Torres, F.~Shan, and G.~C. Trichopoulos, ``A millimeter-wave single-bit reconfigurable intelligent surface with high-resolution beam-steering and suppressed quantization lobe,'' \emph{IEEE Open J. Antennas Propag.}, vol.~6, no.~1, pp. 311 -- 325, 2024.

\bibitem{vabichevich2023suppression}
D.~Vabichevich, A.~Belov, and A.~Sayanskiy, ``Suppression of quantization lobes in 1-bit reconfigurable intelligent surfaces,'' \emph{IEEE Antennas Wireless Propag. Lett.}, vol.~22, no.~12, pp. 2808--2811, 2023.

\bibitem{chen2018computational}
X.~Chen, \emph{Computational methods for electromagnetic inverse scattering}.\hskip 1em plus 0.5em minus 0.4em\relax John Wiley \& Sons, 2018.

\bibitem{imani2020review}
M.~F. Imani~et al., ``Review of metasurface antennas for computational microwave imaging,'' \emph{IEEE Trans. Antennas Propag.}, vol.~68, no.~3, pp. 1860--1875, 2020.

\bibitem{saigre2022intelligent}
C.~Saigre-Tardif, R.~Faqiri, H.~Zhao, L.~Li, and P.~del Hougne, ``Intelligent meta-imagers: {From} compressed to learned sensing,'' \emph{Appl. Phys. Rev.}, vol.~9, no.~1, 2022.

\bibitem{donoho2006compressed}
D.~L. Donoho, ``Compressed sensing,'' \emph{IEEE Trans. Inf. Theory}, vol.~52, no.~4, pp. 1289--1306, 2006.

\bibitem{eldar2012compressed}
Y.~C. Eldar and G.~Kutyniok, \emph{Compressed sensing: {Theory} and applications}.\hskip 1em plus 0.5em minus 0.4em\relax Cambridge university press, 2012.

\bibitem{10096085}
J.~He, A.~Fakhreddine, H.~Wymeersch, and G.~C. Alexandropoulos, ``Compressed-sensing-based 3d localization with distributed passive reconfigurable intelligent surfaces,'' in \emph{Proc. IEEE International Conference on Acoustics, Speech, and Signal Processing}, Jun. 2023.

\bibitem{alexandropoulos2022localization}
G.~C. Alexandropoulos, I.~Vinieratou, and H.~Wymeersch, ``Localization via multiple reconfigurable intelligent surfaces equipped with single receive {RF} chains,'' \emph{IEEE Wireless Commun. Lett.}, vol.~11, no.~5, pp. 1072--1076, May 2022.

\bibitem{jiang2024near}
Y.~Jiang, F.~Gao, Y.~Liu, S.~Jin, and T.~Cui, ``Near field computational imaging with {RIS} generated virtual masks,'' \emph{IEEE Trans. Antennas Propag.}, vol.~72, no.~5, pp. 4383--4398, May 2024.

\bibitem{huang2024ris}
Y.~Huang, J.~Yang, C.-K. Wen, and S.~Jin, ``{RIS}-aided single-frequency {3D} imaging by exploiting multi-view image correlations,'' \emph{IEEE Trans. Commun.}, vol.~72, no.~8, pp. 5003--5018, Mar. 2024.

\bibitem{li2024radio}
Z.~Li, A.~Dubey, S.~Shen, N.~K. Kundu, J.~Rao, and R.~Murch, ``Radio tomographic imaging with reconfigurable intelligent surfaces,'' \emph{IEEE Trans. Wireless Commun.}, vol.~23, no.~11, pp. 15\,784 -- 15\,797, Jul. 2024.

\bibitem{hu2022metasketch}
J.~Hu, H.~Zhang, K.~Bian, Z.~Han, H.~V. Poor, and L.~Song, ``{MetaSketch}: {Wireless} semantic segmentation by reconfigurable intelligent surfaces,'' \emph{IEEE Trans. Wireless Commun.}, vol.~21, no.~8, pp. 5916--5929, 2022.

\bibitem{torcolacci2024holographic}
G.~Torcolacci~et al., ``Holographic imaging with {XL-MIMO} and {RIS}: {Illumination} and reflection design,'' \emph{IEEE J. Sel. Top. Signal Process.}, vol.~18, no.~4, pp. 587--602, Jun. 2024.

\bibitem{tong2021joint}
X.~Tong, Z.~Zhang, J.~Wang, C.~Huang, and M.~Debbah, ``Joint multi-user communication and sensing exploiting both signal and environment sparsity,'' \emph{IEEE J. Sel. Top. Signal Process.}, vol.~15, no.~6, pp. 1409--1422, Sept. 2021.

\bibitem{dardari2021nlos}
D.~Dardari, N.~Decarli, A.~Guerra, and F.~Guidi, ``{LOS/NLOS} near-field localization with a large reconfigurable intelligent surface,'' \emph{IEEE Trans. Wireless Commun.}, vol.~21, no.~6, pp. 4282--4294, Nov. 2021.

\bibitem{wang2025dreamer}
F.~Wang~et al., ``Dreamer: {Dual-RIS}-aided imager in complementary modes,'' \emph{IEEE Trans. Antennas Propag.}, vol.~73, no.~7, pp. 4863--4878, 2025.

\bibitem{9475155}
G.~C. Alexandropoulos, N.~Shlezinger, and P.~del Hougne, ``Reconfigurable intelligent surfaces for rich scattering wireless communications: {Recent} experiments, challenges, and opportunities,'' \emph{IEEE Commun. Mag.}, vol.~59, no.~6, pp. 28--34, Jun. 2021.

\bibitem{wilson2010radio}
J.~Wilson and N.~Patwari, ``Radio tomographic imaging with wireless networks,'' \emph{IEEE Trans. Mob. Comput.}, vol.~9, no.~5, pp. 621--632, May 2010.

\bibitem{karl2023foundations}
W.~C. Karl~et al., ``The foundations of computational imaging: {A} signal processing perspective,'' \emph{IEEE Signal Process Mag.}, vol.~40, no.~5, pp. 40--53, Jul. 2023.

\bibitem{mi2023towards}
T.~Mi, J.~Zhang, R.~Xiong, Z.~Wang, P.~Zhang, and R.~C. Qiu, ``Towards analytical electromagnetic models for reconfigurable intelligent surfaces,'' \emph{IEEE Trans. Wireless Commun.}, vol.~23, no.~5, pp. 4170--4185, May 2024.

\bibitem{sleasman2016microwave}
T.~Sleasman, M.~F. Imani, J.~N. Gollub, and D.~R. Smith, ``Microwave imaging using a disordered cavity with a dynamically tunable impedance surface,'' \emph{Phys. Rev. Appl.}, vol.~6, no.~5, p. 054019, Nov. 2016.

\bibitem{ozdogan2019intelligent}
{\"O}.~{\"O}zdogan, E.~Bj{\"o}rnson, and E.~G. Larsson, ``Intelligent reflecting surfaces: {Physics}, propagation, and pathloss modeling,'' \emph{IEEE Wireless Commun. Lett.}, vol.~9, no.~5, pp. 581--585, Dec. 2019.

\bibitem{mercuri2017frequency}
M.~Mercuri, Y.-H. Liu, I.~Lorato, T.~Torfs, A.~Bourdoux, and C.~Van~Hoof, ``Frequency-tracking {CW} doppler radar solving small-angle approximation and null point issues in non-contact vital signs monitoring,'' \emph{IEEE Trans. Biomed. Circuits Syst.}, vol.~11, no.~3, pp. 671--680, Jun. 2017.

\bibitem{marchenko1967distribution}
V.~Marchenko and L.~Pastur, ``The distribution of eigenvalues in certain sets of random matrices math,'' \emph{USSR-st}, vol.~1, no.~4, pp. 457--483, 1967.

\bibitem{anderson2010introduction}
G.~W. Anderson, A.~Guionnet, and O.~Zeitouni, \emph{An introduction to random matrices}.\hskip 1em plus 0.5em minus 0.4em\relax Cambridge university press, 2010, no. 118.

\bibitem{ammari2016phased}
H.~Ammari, Y.~T. Chow, and J.~Zou, ``Phased and phaseless domain reconstructions in the inverse scattering problem via scattering coefficients,'' \emph{SIAM J. Appl. Math.}, vol.~76, no.~3, pp. 1000--1030, 2016.

\bibitem{fannjiang2020numerics}
A.~Fannjiang and T.~Strohmer, ``The numerics of phase retrieval,'' \emph{Acta Numer.}, vol.~29, pp. 125--228, Nov. 2020.

\bibitem{yuan2017phase}
Z.~Yuan and H.~Wang, ``Phase retrieval via reweighted {Wirtinger} flow,'' \emph{Appl. Opt.}, vol.~56, no.~9, pp. 2418--2427, Mar. 2017.

\end{thebibliography}

\begin{IEEEbiography}[{\includegraphics[width=1in,height=1.25in,clip,keepaspectratio]{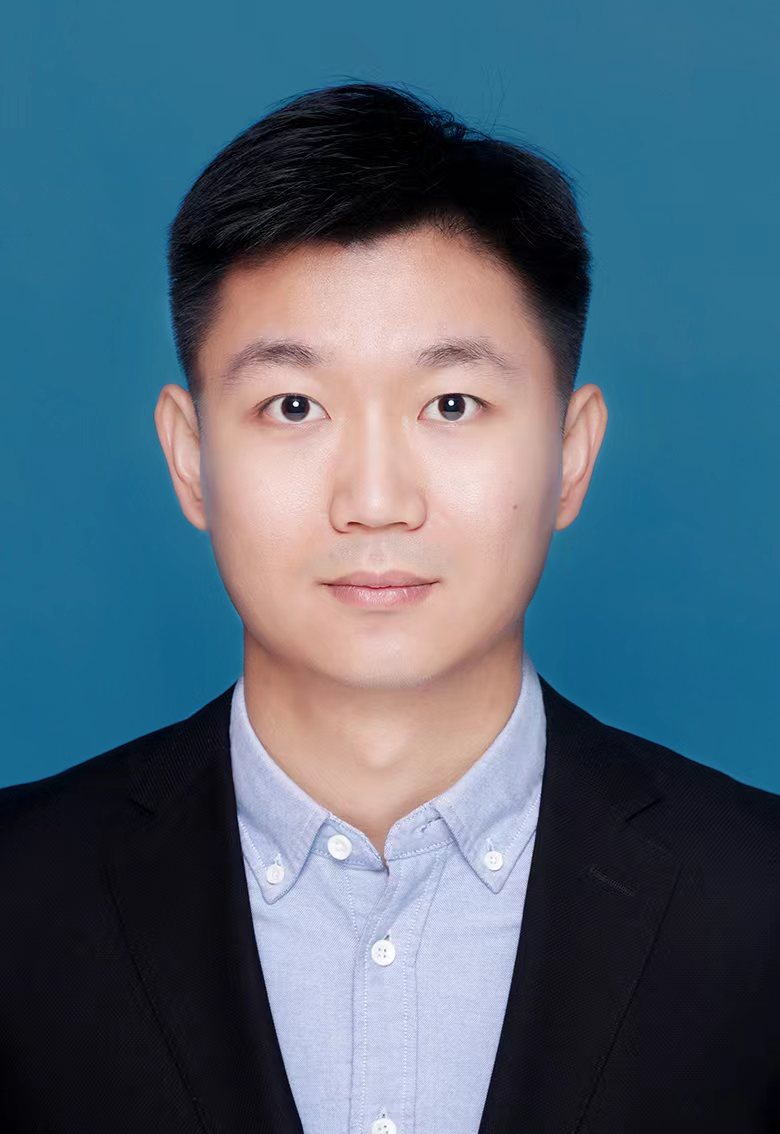}}]{Fuhai Wang}
(Graduate Student Member, IEEE) received the B.S. degree in communication engineering from Chongqing University of Posts and Telecommunications, China, in 2018, and the M.S. degree in materials science from University of Chinese Academy of Sciences, China, in 2021. He is currently pursuing the Ph.D. degree with the Institute of Artificial Intelligence, Huazhong University of Science and Technology, China. His current research interests include wireless localization and sensing, metamaterials imaging, and RF-based scene reconstruction. \end{IEEEbiography}

\begin{IEEEbiography}[{\includegraphics[width=1in,height=1.25in,clip,keepaspectratio]{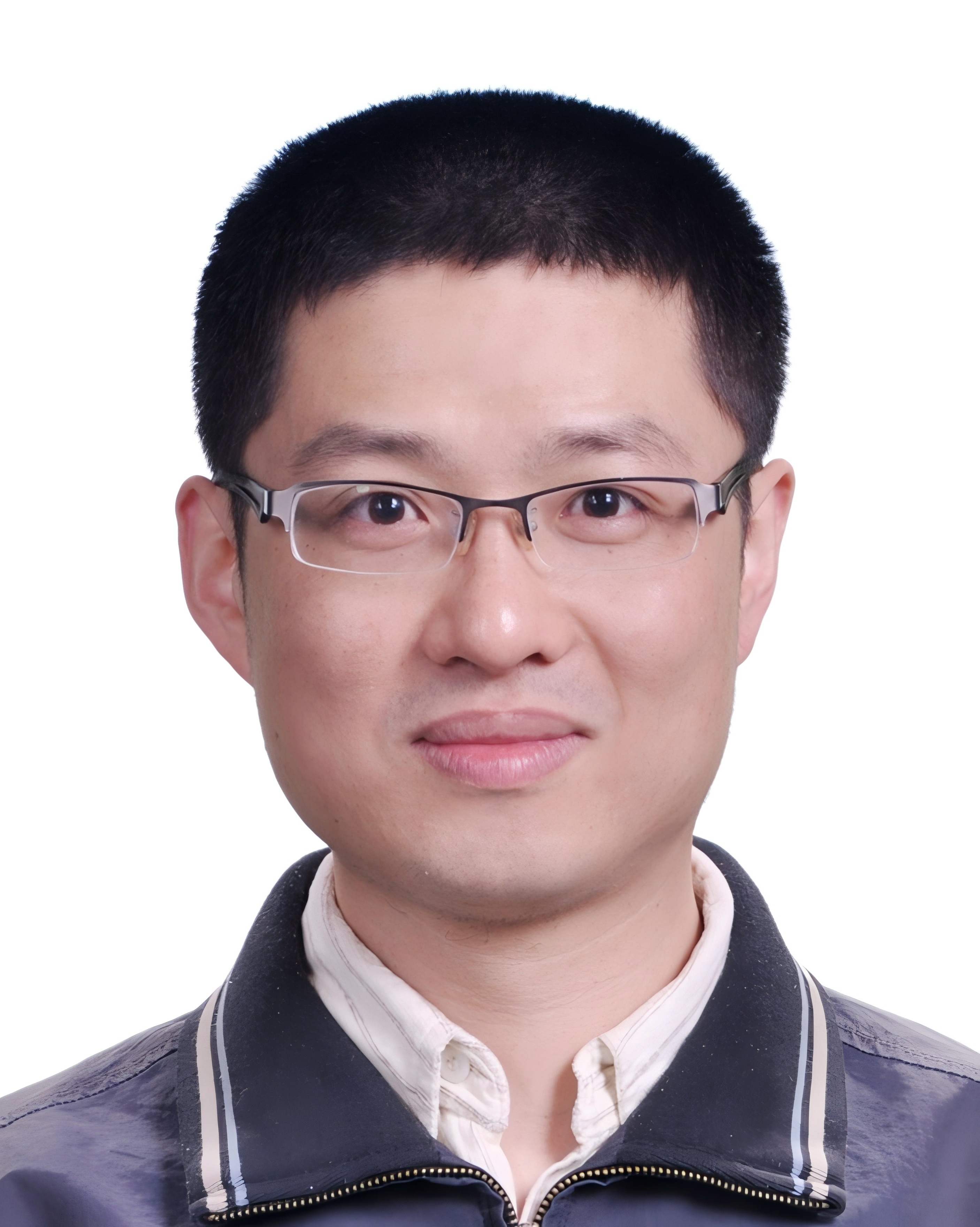}}]{Tiebin Mi}
(Member, IEEE) received the B.E. degree in computer science from Xidian University, China, in 2002, and the Ph.D. degree in electrical engineering from the Institute of Acoustics, Chinese Academy of Sciences, China, in 2010. Currently, he is a Lecturer (an Assistant Professor) with the School of Electronic Information and Communications, Huazhong University of Science and Technology, China. His current research interests include wireless communications, high-dimensional signal processing, random matrix theory, and reconfigurable intelligent surfaces.\end{IEEEbiography}

\begin{IEEEbiography}[{\includegraphics[width=1in,height=1.25in,clip,keepaspectratio]{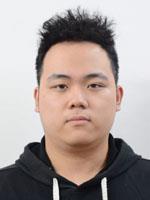}}]{Chun Wang}
received the B.S. degree in electromagnetic fields and wireless technology from Huazhong University of Science and Technology. He is currently pursuing the M.S. degree with the Huazhong University of Science and Technology. His current research interests include wireless localization and sensing, dynamic metamaterial antenna.
\end{IEEEbiography} 

\begin{IEEEbiography}[{\includegraphics[width=1in,height=1.25in,clip,keepaspectratio]{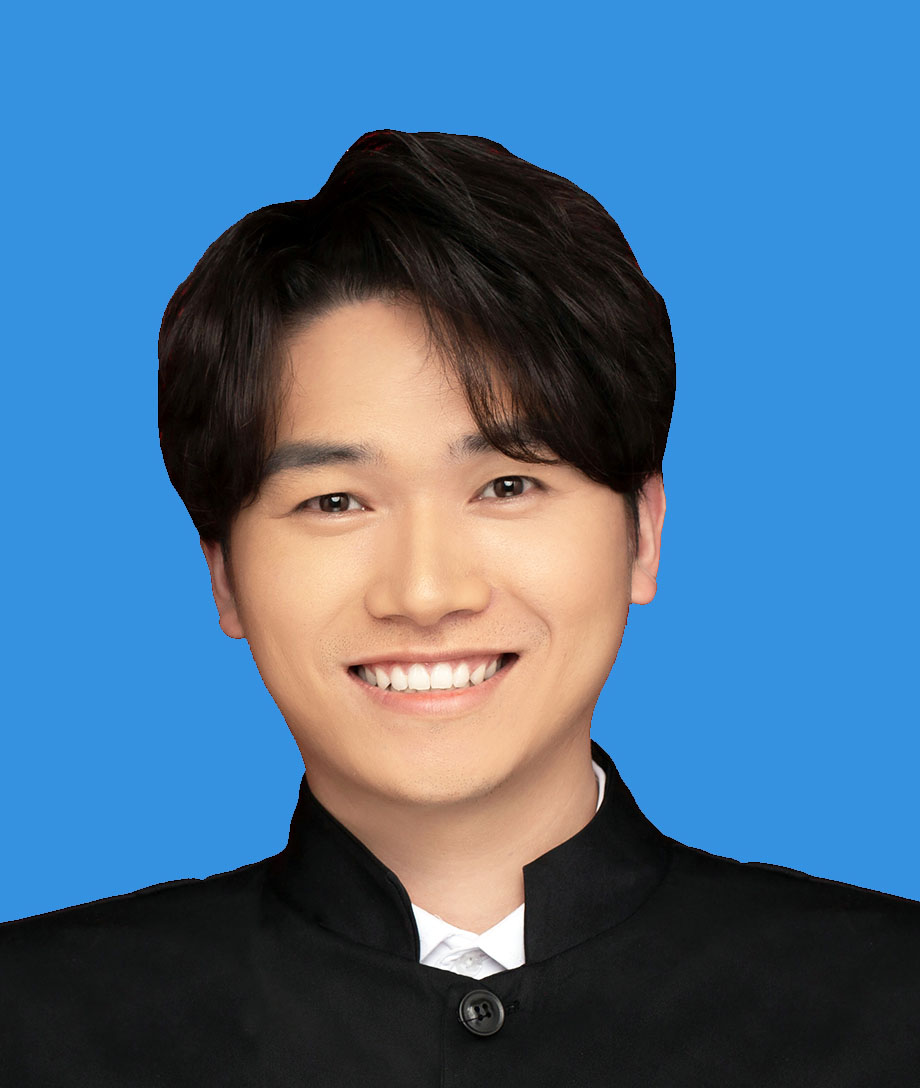}}]{Rujing Xiong}
(Member, IEEE) received the B.S. degree in Bioinformatics from Zhengzhou University, China, in 2017, the M.Eng. degree in Electronics and Communication Engineering from Central South University, China, in 2020, and the Ph.D. degree in Information and Communication Engineering from Huazhong University of Science and Technology, China. He is currently a Research Fellow with the School of Science and Engineering, The Chinese University of Hong Kong, Shenzhen, China. His research interests include wireless communications, signal processing, non-convex optimization, reconfigurable intelligent surfaces, and analog computation. He received the Best Paper Award for IEEE/CIC ICCC 2025.
\end{IEEEbiography}

\begin{IEEEbiography}[{\includegraphics[width=1in,height=1.25in,clip,keepaspectratio]{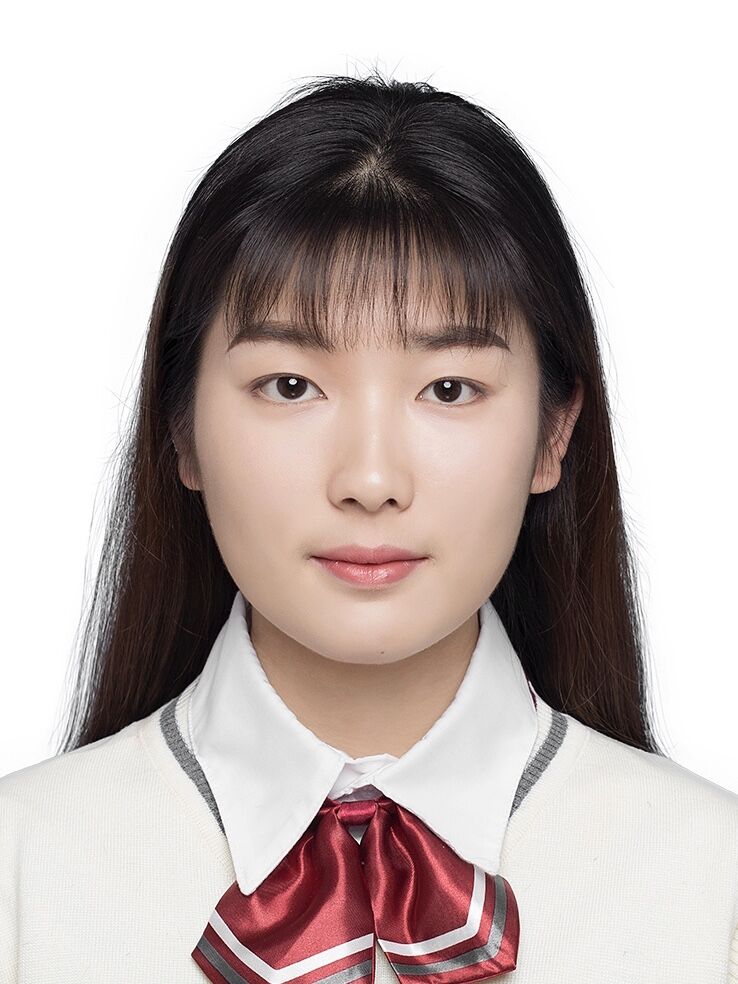}}]{Zhengyu Wang}
(Student Member, IEEE) received the B.S. degree in communication engineering from Central South University, Changsha, China, in 2020.  She is currently pursuing the Ph.D. degree in Huazhong University of Science and Technology, Wuhan, China. Her current research interests include Reconfigurable Intelligent Surfaces (RISs), Random Matrix Theory.
\end{IEEEbiography}

\begin{IEEEbiography}[{\includegraphics[width=1in,height=1.25in,clip,keepaspectratio]{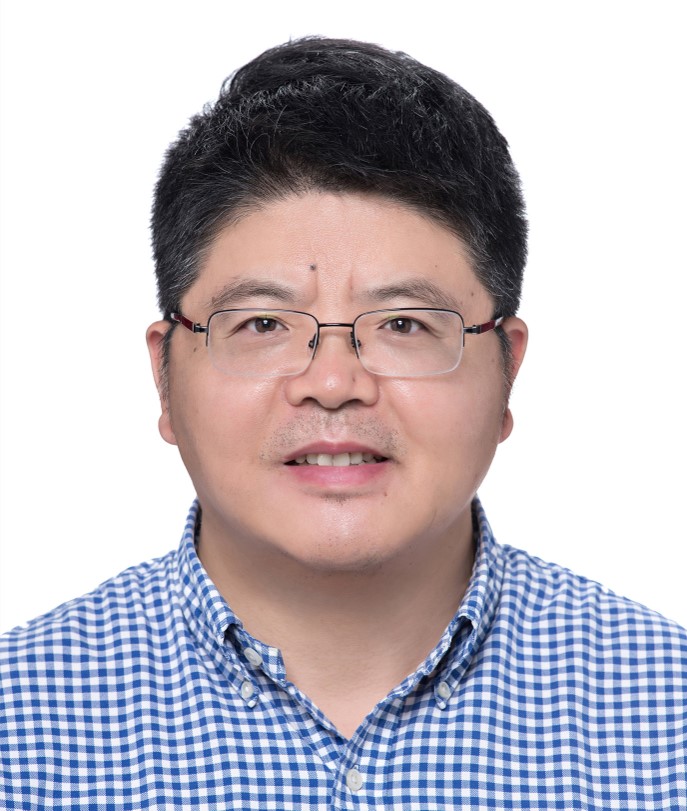}}]{Robert Caiming Qiu}
(Fellow, IEEE) received the Ph.D. degree in electrical engineering from New York University (former Polytechnic University) Brooklyn, NY, USA. He joined the School of Electronic Information and Communications, Huazhong University of Science and Technology, China, as a Full Professor, in 2020. Before joining HUST, he was an Associate Professor with the Department of Electrical and Computer Engineering, Center for Manufacturing Research, Tennessee Tech University, Cookeville, TN, USA, in 2003, where he became a Professor in 2008. He founded and served as the CEO and President of Wiscom Technologies Inc., a company specializing in manufacturing and marketing WCDMA chipsets. In 2003, he was acquired by Intel. He was with GTE Laboratories Inc. (now Verizon), Waltham, MA, USA, and Bell Laboratories, Lucent, Whippany, NJ, USA. He has coauthored Cognitive Radio Communication and Networking: Principles and Practice (John Wiley, 2012) and Cognitive Networked Sensing: A Big Data Way (Springer, 2013) and authored Big Data and Smart Grid (John Wiley, 2015). He has authored over 100 journal articles/book chapters and 120 conference papers. He was a Guest Book Editor of Ultra-Wideband (UWB) Wireless Communications (New York: Wiley, 2005) and three Special Issues on UWB, including the IEEE Journal on Selected Areas in Communications, IEEE Transactions on Vehicular Technology, and IEEE Transactions on Smart Grid. Furthermore, he has made 15 contributions to 3GPP and IEEE standards bodies. He has served as a TPC Member for GLOBECOM, ICC, WCNC, and MILCOM. He has also served as an Associate Editor for IEEE Transactions on Vehicular Technology and other international journals. \end{IEEEbiography}

\end{document}